\title{Depth First Search in the Semi-streaming Model}
\author{Shahbaz Khan\thanks{This research work was supported by the European Research Council under the European Union's Seventh Framework Programme (FP/2007-2013)~/~ERC Grant Agreement no. 340506.}\\Faculty of Computer Science\\University of Vienna, Austria\\ \texttt{shahbaz.khan@univie.ac.at} \and
Shashank K. Mehta\\Dept. of Computer Science and Engineering\\Indian Institute of Technology Kanpur, India\\ \texttt{skmehta@cse.iitk.ac.in}}
\newtheorem{theorem}{Theorem}[section]
\newtheorem{lemma}{Lemma}[section]
\newtheorem{remark}{Remark}[section]
\newtheorem{observation}{Observation}[]
\newcommand{\II}{{\cal I}}
\date{}
\begin{document}
\maketitle

\begin{abstract}
Depth first search (DFS) tree is a fundamental data structure for solving various graph problems. The classical algorithm for building a DFS tree requires $O(m+n)$ time for a given undirected graph $G$ having $n$ vertices and $m$ edges.  In the streaming model, an algorithm is allowed several passes (preferably single) over the input graph having a restriction on the size of local space used. 

Now, a DFS tree of a graph can be trivially computed using a single pass if $O(m)$ space is allowed. In the semi-streaming model allowing $O(n)$ space, it can be computed in $O(n)$ passes over the input stream, where each pass adds one vertex to the DFS tree. However, it remains an open problem to compute a DFS tree using $o(n)$ passes  using $o(m)$ space even in any relaxed streaming environment.

We present the first semi-streaming algorithms that compute a DFS tree of an undirected graph in $o(n)$ passes using $o(m)$ space. We first describe an extremely simple algorithm that requires at most $\lceil n/k\rceil$ passes to compute a DFS tree using $O(nk)$ space, where $k$ is any positive integer. For example using $k=\sqrt{n}$, we can compute a DFS tree in $\sqrt{n}$ passes using $O(n\sqrt{n})$ space. We then improve this algorithm by using more involved techniques to reduce the number of passes to  $\lceil h/k\rceil$ under similar space constraints, where $h$ is the height of the computed DFS tree. In particular, this algorithm improves the bounds for the case where the computed DFS tree is {\em shallow} (having $o(n)$ height). Moreover, this algorithm is presented in form of a {\em framework} that allows the {\em flexibility} of using any algorithm to maintain a DFS tree of a stored sparser subgraph as a {\em black box}, which may be of an independent interest. Both these algorithms essentially demonstrate the existence of a trade-off between the space and number of passes required for computing a DFS tree. Furthermore, we evaluate these algorithms experimentally which reveals their exceptional performance in practice. For both random and real graphs, they require merely a {\em few} passes even when allowed just $O(n)$ space. \\

\noindent
\textbf{Keywords:} Depth First Search, DFS, Semi-Streaming, Streaming, Algorithm.

\end{abstract}

\newpage

\section{Introduction}
Depth first search (DFS) is a well known graph traversal technique. 
Right from the seminal work of Tarjan~\cite{Tarjan72}, DFS traversal 
has played an important role in the design of efficient algorithms for many 
fundamental graph problems, namely, bi-connected components, 
strongly connected components, topological sorting~\cite{Tarjan76}, 
dominators in directed graph~\cite{Tarjan74}, etc. Even in undirected graphs, 
DFS traversal have various applications including computing connected components, cycle detection, 
edge and vertex connectivity~\cite{EvenT75} (via articulation points and bridges), 
bipartite matching~\cite{HopcroftK73}, planarity testing~\cite{HopcroftT74} etc. 
In this paper, we address the problem of computing a DFS tree in the semi-streaming environment. 


The streaming model~\cite{AlonMS99,FeigenbaumKSV03,GuhaKS01} is a popular model for computation 
on large data sets wherein a lot of algorithms have been developed~\cite{FlajoletM85,HenzingerRR98,GuhaKS01,Indyk06} to address significant problems in this model.  
The model requires the entire input data to be accessed as a stream, typically in a single pass
over the input, allowing very small amount of storage ($poly\log$ in input size).
A streaming algorithm 
must judiciously choose the data to be saved in the small space,
so that the computation can be completed successfully. In the context of graph problems, 
this model is adopted in the following fashion. For a given graph $G=(V,E)$ having $n$ vertices, 
an input stream sends the graph edges in $E$ using an arbitrary order only once, and the 
allowed size of local storage is $O(poly\log n)$. The algorithm iteratively asks for the next edge and 
performs some computation. After the stream is over, the final computation is performed and the 
result is reported. At no time during the entire process should the total size of stored data 
exceed $O(poly\log n)$.

In general only statistical properties of the graph are computable under this 
model, making it impractical for use in more complicated graph problems~\cite{FeigenbaumKMSZ05b,GuruswamiK16}.
A prominent exception for the above claim is the problem of counting triangles 
($3$-cycles) in a graph~\cite{Bar-YossefKS02}. 
Consequently, several relaxed models have been proposed with a goal to solve more complex 
graph problems. One such model is called {\em semi-streaming model}~\cite{Muthukrishnan05,FeigenbaumKMSZ05} which 
relaxes the storage size to $O(n~poly\log n)$. Several significant problems have been studied under this model (surveys in~\cite{ConnellC09,Zhang10,McGregor14}). Moreover, even though it is preferred to allow only a single pass over the input stream, several hardness results~\cite{HenzingerRR98,BuchsbaumGW03,FeigenbaumKMSZ05,BorradaileMM14,GuruswamiO16} have reported the limitations of using a single pass (or even $O(1)$ passes). This has led to the development of various multi-pass algorithms~\cite{FeigenbaumKMSZ05,FeigenbaumKMSZ05b,McGregor05,AhnG13,Kapralov13,KaleT17} in this model.
Further, several streaming algorithms maintaining approximate distances~\cite{FeigenbaumKMSZ05b,Baswana08,Elkin11} are also
known to require $O(n^{1+\epsilon})$ space (for some constant $\epsilon>0$) relaxing the requirement of $O(n~poly\log n)$ space.


Now, a DFS tree of a graph can be computed in a single pass if $O(m)$ space is allowed. 
If the space is restricted to $O(n)$, it can be trivially computed using $O(n)$ passes over the input stream, 
where each pass adds one vertex to the tree. This can also be easily improved to $O(h)$ passes, 
where $h$ is the height of the computed DFS tree. Despite most applications of DFS trees in undirected graphs 
being efficiently solved in the semi-streaming environment~\cite{WestbrookT92,FeigenbaumKMSZ05,FeigenbaumKMSZ05b,AusielloFL09,AusielloFL12,Farach-ColtonHL15,Kliemann16}, due to its fundamental nature DFS is considered a long standing open problem~\cite{Farach-ColtonHL15,ConnellC09,Ruhl03} even for undirected graphs. Moreover, computing a DFS tree in ${O}(poly\log n)$ passes is considered hard~\cite{Farach-ColtonHL15}. To the best of our knowledge, it remains an open problem to compute a DFS tree using $o(n)$ passes even in any relaxed streaming environment. 


In our results, we borrow some key ideas from recent sequential algorithms~\cite{BaswanaK17,BaswanaCCK16} 
for maintaining dynamic DFS of undirected graphs. 
Recently, similar ideas were also used by Khan~\cite{Khan17} who presented a semi-streaming
algorithm that uses using $O(n)$ space for maintaining dynamic DFS of an undirected graph, 
requiring $O(\log^2 n)$ passes per update.
%

%


\subsection{Our Results}
\label{sec:OurResults}
We present the first semi-streaming algorithms to compute a DFS tree on an undirected graph 
in $o(n)$ passes. Our first result can be described using the following theorem.

\begin{theorem}
Given an undirected graph $G=(V,E)$, the DFS tree of the graph can be computed by a 
semi-streaming algorithm in at most $n/k$ passes using $O(nk)$ space, requiring $O(m\alpha(m,n))$ time per pass.
\label{thm:AdvAlg}
\end{theorem}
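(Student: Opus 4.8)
The plan is to build the DFS tree incrementally, one ``chunk'' of $k$ vertices per pass, while storing only $O(nk)$ edges in local memory at any time. I maintain a partial DFS tree $T$ on a set $U$ of already-processed vertices, together with a carefully chosen subgraph of the edges incident to $V\setminus U$. The key invariant, borrowed from the incremental-DFS literature, is the following: for each vertex $v$ still outside the tree, and for each path $P$ in $T$ from the root to a leaf, it suffices to remember only the edge from $v$ to the \emph{lowest} (deepest) vertex of $P$ that $v$ is adjacent to. This is the standard DFS property that an edge from $v$ must go to an ancestor of $v$; storing only the lowest such attachment point on each root-to-leaf path is enough to correctly extend the DFS tree, and it bounds the stored information by $O(n)$ edges per unprocessed ``branch''.

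Concretely, the algorithm proceeds in $\lceil n/k\rceil$ phases. In each pass I stream all of $E$ once. During the pass I update, for the current partial tree $T$ on $U$, the set of \emph{relevant} edges: for every $v\notin U$ I keep one edge to the deepest vertex of $T$ it sees, organized so that the total is $O(nk)$ (this is where the factor $k$ enters — roughly, I allow up to $k$ new vertices to be attached, so I must remember attachment edges along $k$ potential new branches, or equivalently the tree is grown in blocks of size $k$). After the pass, using only the stored $O(nk)$ edges and \emph{no} further stream access, I run the ordinary linear-time DFS procedure on this stored subgraph to extend $T$ by at least $k$ vertices, producing a new partial DFS tree that is guaranteed (by the invariant) to be extendable to a genuine DFS tree of all of $G$. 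Since each pass adds at least $k$ vertices and there are $n$ vertices, at most $\lceil n/k\rceil \le n/k$ passes are needed; the work per pass is the $O(m)$ cost of reading the stream plus the $O(nk)=O(m)$ cost of the DFS on the stored subgraph, and the $\alpha(m,n)$ factor comes from using a union–find structure to decide, for each streamed edge, whether it is the new deepest attachment edge (i.e.\ to maintain ancestor/ear relationships in $T$ efficiently).

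The correctness argument has two parts that I would carry out carefully. First, the \emph{reduction lemma}: if $T$ is a partial DFS tree of $G[U]$ that is ``DFS-extendable'' (admits an extension to a DFS tree of all of $G$), then the reduced edge set — one deepest edge per unprocessed vertex per current branch — preserves enough information that any DFS extension computed on the reduced graph is also a valid DFS extension on $G$. This rests on the classical fact that a cross edge is forbidden in a DFS tree, so only the deepest attachment matters. Second, the \emph{progress lemma}: the DFS run on the stored subgraph after a pass indeed adds at least $k$ new vertices to $T$ — this needs the stored subgraph to be connected enough, which follows because we keep at least one tree-attachment edge for every reachable unprocessed vertex.

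The main obstacle I expect is pinning down exactly which $O(nk)$ edges to store and proving the reduction lemma rigorously: the naive ``one deepest edge per root-to-leaf path per vertex'' can have too many paths, so the real statement must be in terms of the $O(k)$ branches being grown in the current phase (or a chain-decomposition of $T$), and one must verify that an adversarially ordered stream cannot force the algorithm to discard an edge that a later DFS extension needs. Handling this — essentially showing the stored ``sketch'' of $G$ is a faithful substitute for the full graph as far as DFS-extendability is concerned — is the technical heart of the proof; the pass-count and space bookkeeping are then routine.
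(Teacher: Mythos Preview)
Your proposal has a genuine gap in the \emph{progress lemma}: you never give a mechanism that forces at least $k$ new vertices to be added per pass. The edges you store are attachment edges from $V\setminus U$ to the current tree $T$; such edges let you hang each unprocessed vertex somewhere on $T$, but they contain essentially no information about edges \emph{among} the unprocessed vertices. A DFS extension using only these attachment edges could be forced to add the unprocessed vertices as a flat layer of leaves (one new level), after which you must stream again to learn how those new leaves connect to each other. ``Connected enough'' guarantees you can add \emph{some} vertex, not that you can add a path of length $k$; your justification of the progress lemma (``we keep at least one tree-attachment edge for every reachable unprocessed vertex'') establishes only the former. The hand-wave about ``$k$ potential new branches'' does not fix this, and you yourself flag it as the main obstacle.

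The paper's argument is quite different and sidesteps this entirely. For each component $C$ of the unvisited graph it simply buffers the \emph{first} $|V_C|\cdot k$ edges of $C$ that appear in the stream (arbitrary edges, no selection), together with a spanning tree $T_C$ of $C$ so the buffered subgraph is connected. It then computes a DFS tree of this buffered subgraph and extracts the longest root-to-leaf path $P$. The crucial lemma is the \emph{Min-Height Property}: any DFS tree of a connected graph on $n'$ vertices and $m'$ edges has height at least $m'/n'$ (because every non-tree edge is a back edge, so each vertex has at most $h$ edges associated to it). With $m'\ge |V_C|\cdot k$ this gives $|P|\ge k$, which is exactly the progress guarantee you are missing. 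The path $P$ is then added to $T$; Union--Find on the remainder of the pass computes the components of $C\setminus P$ and, for each, the lowest edge to $P$ (this is where the $\alpha(m,n)$ factor actually arises, via the components property, not from ancestor queries as you suggest). Your ``reduction lemma'' intuition is correct in spirit and corresponds to the components property, but it is used only to pick the new roots after $P$ is removed, not to select which edges to buffer.
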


As described earlier, a simple algorithm can compute the DFS tree in $O(h)$ passes, where $h$ is the height of the DFS tree. Thus, for the graphs having a DFS tree with height $h=o(n)$ (see Appendix~\ref{apn:advAlg1WC} for details), we improve our result for such graphs in the following theorem.

\begin{theorem}
Given an undirected graph $G$, a DFS tree of $G$ can be computed by a 
semi-streaming algorithm using $\lceil h/k \rceil$ passes using $O(nk)$ space 
requiring amortized $O(m+nk)$ time per pass for any integer 
$k\leq h$, where $h$ is the height of the computed DFS tree.\footnote{
Note that there can be many DFS trees of a graph having varying heights, say $h_{min}$ to $h_{max}$. 
Our algorithm does not guarantee the computation of DFS tree having minimum height $h_{min}$, 
rather it simply computes a {\em valid} DFS tree with height $h$, where $h_{min}\leq h\leq h_{max}$.
}
\label{thm:FinalAlg}
\end{theorem}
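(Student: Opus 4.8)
\noindent\emph{Proof plan.} The plan is to build the DFS tree top‑down, a block of $k$ levels at a time, maintaining after each pass a \emph{partial DFS tree} $T$ spanning the set $U$ of already visited vertices, subject to two invariants: (i) $T$ is a genuine DFS tree of the induced subgraph $G[U]$, so that $T$ is the top portion of some DFS tree of $G$; and (ii) the \emph{component property}: every connected component $C$ of $G[V\setminus U]$ has all of its neighbours in $U$ lying on a single root‑to‑leaf path of $T$, whose deepest such vertex is the attachment point $a_C$. Initially $T$ is the single root $r$ and $U=\{r\}$. The goal of one pass is to extend, for every still‑incomplete part of $T$, the tree by at least $k$ further levels; because the incomplete frontier can be taken to be an antichain of $T$ (unvisited components hanging off incomparable vertices are explored independently), these extensions do not interfere. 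When no incomplete region remains we have $U=V$ and $T$ is a complete DFS tree; since each active branch deepens by at least $k$ per pass, the total number of passes is at most $\lceil h/k\rceil$, where $h$ is the height of the tree finally produced (the algorithm just runs to completion, so $h$ need not be known in advance).

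The crux is the single‑pass, $O(nk)$‑space routine that advances the frontier by $k$ levels. During the pass I would read the stream once and retain only a subgraph $\HH$ of the unvisited region below the current frontier, together with the edges connecting that region to the relevant attachment paths, pruned so that $\HH$ has only $O(nk)$ edges yet still \emph{certifies} $k$ levels of depth‑first progress — a budgeted, reduced‑adjacency‑list style certificate in the spirit of the dynamic DFS constructions of~\cite{BaswanaK17,BaswanaCCK16} (keeping, per unvisited component, a sparse spanning structure plus representative connecting edges, with a per‑vertex budget tied to the depth $k$ we intend to descend). The correctness claim is that a DFS tree of $\HH$, rooted compatibly with the already built $T$ and truncated at depth $k$ below the frontier, is consistent with \emph{some} DFS tree of $G$; this is exactly where I would invoke the structural machinery of~\cite{BaswanaK17,BaswanaCCK16}, namely that a subtree hanging below a path can be adopted wholesale into a DFS tree because its attachment to the rest of the graph is confined to that path, so the edges discarded during the pass cannot be forced into the next $k$ tree levels. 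Packaging this as a \emph{framework}: any procedure that maintains a DFS tree of the (sparse, in‑memory) subgraph $\HH$ as its edges arrive — in the simplest case a direct recomputation, but possibly a more refined incremental scheme in the spirit of~\cite{Khan17} — may be used as a black box for the per‑pass work, which is the flexibility claimed in the statement.

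Once the new $k$ levels have been attached below each active frontier vertex, I would recompute the visited set, the extended tree, the new frontier and the new attachment points: invariant (i) is preserved because the new levels were produced by an honest (partial) DFS on the retained edges, and invariant (ii) is re‑established because, by construction of $\HH$, every newly created unvisited component is adjacent only along a root‑to‑leaf path of the new tree. For the resources: storage is $O(nk)$ for $\HH$ plus $O(n)$ for $T$ and the auxiliary pointers; the time spent in a pass is $O(m)$ to scan and filter the stream plus the black‑box cost of processing an $O(nk)$‑edge graph, i.e.\ $O(nk)$ (or $O\!\left(nk\,\alpha(nk,n)\right)$); since the black‑box work may fluctuate from pass to pass, distributing it over all passes yields the stated amortized $O(m+nk)$ bound, and the pass count $\lceil h/k\rceil$ holds for every integer $k\le h$.

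The main obstacle is the per‑pass structural lemma sketched above: one must design the retained subgraph $\HH$ so that it is simultaneously (a) computable in a single streaming pass within $O(nk)$ space, (b) provably consistent, on the freshly created $k$ levels, with a legitimate DFS tree of the whole graph $G$, and (c) actually guaranteed to yield $k$ levels of progress rather than stalling near the frontier. Reconciling (a) with (b) and (c) — keeping $\HH$ sparse while ensuring the discarded edges cannot force a different or shallower continuation — is the technical heart, and is precisely what the ``adopt‑the‑hanging‑subtree'' property of~\cite{BaswanaK17,BaswanaCCK16} must be leveraged for. A secondary difficulty is the bookkeeping when many unvisited components have attachment points spread along a long path of $T$, so that all of them must be advanced within one pass and one shared $O(nk)$‑space budget, and making the amortized time analysis tight.
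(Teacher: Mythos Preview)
Your high-level plan matches the paper's: advance the partial DFS tree by $k$ levels per pass, maintaining for each unvisited component a spanning structure plus a sparse certificate, and use an incremental DFS routine on the in-memory graph as a black box. The invariants you state are precisely the paper's $\II_1$ and $\II_2$, and your identification of the ``main obstacle'' is accurate. However, you have not supplied the mechanism that actually resolves that obstacle, and the vague phrase ``budgeted, reduced-adjacency-list style certificate'' does not get there.

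The concrete missing idea is twofold. First, the edge-retention rule: the paper keeps, in $H_C$, the current spanning tree $T_C$ together with \emph{every} streamed edge having at least one endpoint in the current top $k$ levels $T'_C$ of $T_C$ (invariant $\II_H$). This is what bounds $|H_C|=O(nk)$, since every retained non-tree edge is a back edge with its upper endpoint among at most $k$ ancestors. Second, and this is the crux you do not mention, the black-box restructuring procedure must satisfy the \emph{monotonic fall} property: when a cross edge forces a rebuild of $T_C$, no vertex may rise in level. Monotonic fall guarantees that the vertex set of $T'_C$ can only shrink during the pass, so an edge discarded earlier (both endpoints below level $k$) can never later acquire an endpoint in $T'_C$; hence $\II_H$ remains satisfiable without revisiting the stream. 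Without this property your step (b) fails: after a restructuring, vertices could migrate into the top $k$ levels, and you would need their incident edges, which you have already thrown away. The paper's Lemma~5.3 is exactly the argument that monotonic fall plus $\II_H$ suffices to preserve the DFS invariants $\II_{T_1}$ and $\II_{T_2}$ (your (i) and (ii)); your sketch invokes the ``adopt-the-hanging-subtree'' intuition but does not isolate the property the black box must have. Your worry (c) about stalling is also dissolved by this design: since $T_C$ is always a spanning tree of the whole component, its top $k$ levels are genuinely $k$ new levels unless the component is exhausted. Finally, the amortized $O(m+nk)$ time per pass comes from a potential argument tied to monotonic fall (total level-drops over the entire run are $O(nh)$, hence $O(nk)$ per pass on average), not from a per-pass bound on processing an $O(nk)$-edge graph.
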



Since typically the space allowed in the semi-streaming model is $O(n~poly\log n)$, 
the improvement in upper bounds of the problem by our results is considerably small 
(upto $poly\log n$ factors).
Recently, Elkin~\cite{Elkin17} presented the first $o(n)$ pass algorithm for computing 
Shortest Paths Trees. Using $O(nk)$ local space, it computes the shortest path tree
from a given source in $O(n/k)$ passes for unweighted graphs, and in $O(n\log n/k)$ passes
for weighted graphs. The significance of such results, despite improving the upper bounds 
by only small factors, is substantial because they address fundamental problems. 
The lack of any progress for such fundamental problems despite several decades of research on streaming 
algorithms further highlights the significance of such results.
Moreover, allowing $O(n^{1+\epsilon})$ space (as in \cite{FeigenbaumKMSZ05b,Baswana08,Elkin11})
such results improves the upper bound significantly by $O(n^{\epsilon})$ factors. 
Furthermore, they demonstrate the existence of a trade-off between the space and number of passes 
required for computing such fundamental structures.

Our final algorithm is presented in form of a {\em framework}, which can use any algorithm for 
maintaining a DFS tree of a stored sparser subgraph, 
provided that it satisfies the property of {\em monotonic fall}. Such a {\em framework}
allows more flexibility and is hopefully much easier to extend to better algorithms for computing 
a DFS tree or other problems requiring a computation of DFS tree. Hence we believe our {\em framework}
would be of independent interest.

We also augment our theoretical analysis with the experimental evaluation of our proposed algorithms. For both random and real graphs, the algorithms require merely a {\em few} passes even when the allowed space is just $O(n)$. The exceptional performance and surprising observations of our experiments on random graphs might also be of independent interest.

%

\subsection{Overview}
We now briefly describe the outline of our paper. In Section~\ref{sec:prelim} we establish the 
terminology and notations used in the remainder of the paper. In order to present the main ideas 
behind our approach in a simple and comprehensible manner, we present the algorithm in 
four stages. {\em Firstly} in Section~\ref{sec:alg1}, we describe the basic algorithm 
to build a DFS tree in $n$ passes, which adds a new vertex to the DFS tree in every pass over 
the input stream. {\em Secondly} in Section~\ref{sec:alg2}, we improve this algorithm to compute a DFS tree in $h$ passes, where $h$ is the height of the final DFS tree. This algorithm essentially 
computes all the vertices in the next level of the currently built DFS tree simultaneously, 
building the DFS tree by one level in each pass over the input stream. Thus, in the $i^{th}$ 
pass every vertex on the $i^{th}$ level of the DFS tree is computed.
{\em Thirdly} in Section~\ref{sec:alg3}, we describe an advanced algorithm which uses $O(nk)$ 
space to add a path of length at least $k$ to the DFS tree in every pass over the input stream. 
Thus, the complete DFS tree can be computed in $\lceil n/k \rceil$ passes.
{\em Finally}, in Section~\ref{sec:alg4}, we improve the algorithm to simultaneously add 
all the subtrees constituting the next $k$ levels of the final DFS tree starting from the leaves 
of the current tree $T$. Thus, $k$ levels are added to the DFS tree in each pass over the input stream, computing the DFS tree in $\lceil h/k \rceil$ passes. 
As described earlier, our final algorithm is presented in form of a {\em framework}
which uses as a black box, any algorithm to maintain a DFS tree of a stored sparser subgraph, 
satisfying certain properties. In the interest of completeness, one such algorithm is described in the Appendix~\ref{sec:rebuild}. {\em Lastly} in Section~\ref{sec:exp}, we present the results of the experimental evaluation of these algorithms. The details of this evaluation are deferred to Appendix~\ref{apn:expEval}.


In our advanced algorithms, we employ two interesting properties of a DFS tree, namely,
the {\em components} property~\cite{BaswanaCCK16} and the {\em min-height} property.
These simple properties of any DFS tree prove crucial in building the DFS efficiently
in the streaming environment. 

\section{Preliminaries}
\label{sec:prelim}
Let $G=(V,E)$ be an undirected connected graph having $n$ vertices and $m$ edges.
The DFS traversal of $G$ starting from any vertex $r\in V$ 
produces a spanning tree rooted at $r$ 
called a DFS tree, in $O(m+n)$ time. 
For any rooted spanning tree of $G$, a non-tree edge of the graph
is called a {\it back edge} if one of its endpoints is an ancestor of the 
other in the tree, else it is called a {\it cross edge}. 
A necessary and sufficient condition for any rooted spanning tree to be a 
DFS tree is that every non-tree edge is a back edge. 


In order to handle disconnected graphs, we add a dummy vertex $r$ to the graph
and connect it to all vertices. Our algorithm computes a DFS tree rooted at 
$r$ in this augmented graph, where each child subtree of $r$ is a DFS tree of a 
connected component in the DFS forest of the original graph. 
The following notations will be used throughout the paper.
\begin{itemize}
\itemsep0em 
\item $T:$~ The DFS tree of $G$ incrementally computed by our algorithm.
\item $par(w):$~ Parent of $w$ in $T$. 
\item  $T(x):$ The subtree of $T$ rooted at vertex $x$. 
\item $root(T'):$~ Root of a subtree $T'$ of $T$, i.e., $root\big(T(x)\big)=x$.
\item $level(v):$ Level of vertex $v$ in $T$, where $level(root(T))=0$ and
$level(v)=level(par(v))+1$.
\end{itemize}

In this paper we will discuss algorithms to compute a DFS tree $T$ for the input graph $G$
in the semi-streaming model. In all the cases $T$ will be built iteratively starting from an empty tree.
At any time during the algorithm, we shall refer to the vertices that are not a part of 
the DFS tree $T$ as {\em unvisited} and denote them by $V'$, i.e., $V'=V\setminus T$. 
Similarly, we refer to the subgraph induced by the {\em unvisited} vertices, $G'=G(V')$, 
as the {\em unvisited graph}.
Unless stated otherwise, we shall refer to a connected component of the unvisited graph $G'$
as simply a {\em component}. 
For any component $C$, the set of edges and vertices in the component will be denoted by $E_C$ and $V_C$.
Further, each component $C$ maintains a spanning tree of the component that shall be  referred as $T_C$. 
We refer to a path $p$ in a DFS tree $T$ as an {\em ancestor-descendant} path if one of 
its endpoints is an ancestor of the other in $T$.
Since the DFS tree grows downwards from the root, 
a vertex $u$ is said to be {\em higher} than vertex $v$ if $level(u)<level(v)$.
Similarly, among two edges incident on an ancestor-descendant path $p$, an edge $(x,y)$ is {\em higher}
than edge $(u,v)$ if $y,v\in p$ and $level(y)<level(v)$.

We shall now describe two invariants such that any algorithm computing DFS tree incrementally 
satisfying these invariants at every stage of the algorithm, 
ensures the absence of cross edges in $T$ and hence the correctness of the final DFS tree $T$. \\

{\centering
	\fbox{\parbox{\linewidth}{
			{\bf Invariants:}
			\begin{enumerate}[leftmargin=1cm]
				\item[${\cal I}_1:$] All non-tree edges among vertices in $T$ are back edges, and
				\item[${\cal I}_2:$] For any component $C$ of the unvisited graph, all the edges 
									from $C$ to the partially built DFS tree $T$ are incident on a single 
					  				{\em ancestor-descendant} path of $T$. 		
			\end{enumerate}
		}}
}
\vspace{.25em}

We shall also use the {\em components property} by Baswana et al.~\cite{BaswanaCCK16}, 
described as follows. 

\begin{figure}[!ht]
\centering
\setlength{\belowcaptionskip}{-10pt}
\includegraphics[width=0.3\linewidth]{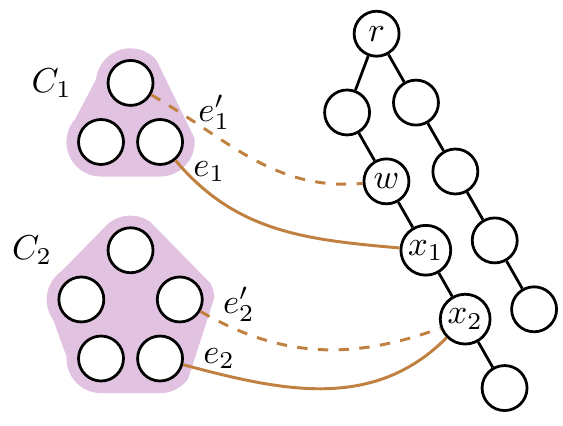}
\caption{Edges $e_1'$ and $e_2'$ can be ignored during the DFS traversal (reproduced from~\cite{BaswanaCCK16}).}
\label{fig:component-property}
\end{figure}

\begin{lemma}[Components Property~\cite{BaswanaCCK16}]
\label{lemma:L()}
Consider a partially completed DFS traversal where $T$ is the partially built DFS tree.
Let the connected components of $G'$ be $C_1,..,C_k$. 
Consider any two edges $e_i$ and $e'_i$ from $C_i$ that 
are incident respectively on a vertex $x_{i}$ and some 
ancestor (not necessarily proper) $w$ of $x_{i}$ in $T$. 
Then it is sufficient to consider only $e_i$ during the DFS traversal, 
i.e., the edge $e'_i$ can be safely ignored. 
%
\end{lemma}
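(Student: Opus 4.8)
The plan is to argue by the characterization of DFS trees already recorded in the Preliminaries: a rooted spanning tree is a DFS tree if and only if every non-tree edge is a back edge. So I would show that ignoring $e_i'$ does not cost us any valid DFS tree; more precisely, that any DFS tree of $G$ that is a valid extension of the current partial tree $T$ and uses $e_i'$ can be replaced by one that does not. The key observation is that $e_i = (x_i, \cdot)$ and $e_i' = (w, \cdot)$ both attach the same component $C_i$ to the already-built tree $T$, and $w$ is an ancestor of $x_i$ in $T$. Since $C_i$ is connected and entirely unvisited, when the DFS traversal eventually descends into $C_i$, the whole of $C_i$ will be explored as one contiguous subtree hanging off whichever entry edge the traversal happens to take first.

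The main steps I would carry out are: (i) Fix any completion of the DFS traversal from the current state; let $D$ be the resulting DFS tree. Suppose for contradiction that $e_i'=(w,y')$ is a tree edge of $D$, i.e. the traversal entered $C_i$ through $w$. (ii) Consider the endpoint $x_i$ of $e_i$ in $T$ and its endpoint, call it $y$, in $C_i$. In $D$, the vertex $y$ lies in the subtree hanging below $w$ (the copy of $C_i$), so $y$ is a descendant of $w$. Also $x_i$ is a descendant of $w$ in $T$ (hence in $D$, since $D$ extends $T$ on the already-visited part). (iii) Now examine the edge $e_i=(x_i,y)$ in $D$: for $D$ to be a DFS tree this must be a back edge, so $x_i$ and $y$ are in ancestor–descendant relation in $D$. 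Since $y\in C_i$ and $x_i\in T$, and all of $C_i$ sits strictly below $w$, the only possibility is that $x_i$ is an ancestor of $y$ — i.e. $x_i$ lies on the root-to-$y$ path, which passes through $w$. (iv) This means the traversal, upon first arriving at $C_i$, could equally have entered through $e_i$ at $x_i$: doing so still yields a valid DFS tree, because every edge from $C_i$ to $T$ is incident on the single ancestor–descendant path through $w$ and $x_i$ (this is exactly Invariant $\II_2$), so all such edges remain back edges, and the internal structure of the DFS of $C_i$ is unaffected. Hence there is a valid DFS completion that ignores $e_i'$, which is what we needed.

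I would phrase the conclusion as: the set of DFS trees obtainable while ignoring $e_i'$ is non-empty and, in fact, ignoring $e_i'$ never eliminates a completion we could not also realize via $e_i$; therefore $e_i'$ is safe to discard. The step I expect to be the real obstacle is (iii)–(iv): carefully justifying that rerouting the entry edge of $C_i$ from $e_i'$ to $e_i$ preserves the back-edge property for \emph{all} non-tree edges — both those internal to $C_i$ (unchanged, since $C_i$ explores as the same connected chunk) and those between $C_i$ and $T$ (covered by Invariant $\II_2$, since they all hit one ancestor–descendant path, and $w, x_i$ both lie on it with $w$ an ancestor of $x_i$). Making the "exploration order within $C_i$ is irrelevant" argument precise is where I'd spend the most care; everything else is bookkeeping with the ancestor–descendant relation.
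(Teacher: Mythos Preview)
Your approach is unnecessarily indirect, and step (iii) contains a genuine error that makes step (iv) vacuous.

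The paper's justification is a one-line \emph{forward} argument: if the traversal enters $C_i$ through $e_i$ at $x_i$ (ignoring $e'_i$), then all of $C_i$ becomes a subtree hanging below $x_i$, and hence below its ancestor $w$. The endpoint $y'\in C_i$ of $e'_i$ is therefore a descendant of $w$, so $e'_i=(w,y')$ is a back edge in the resulting tree. That is the entire proof --- no replacement or rerouting is needed.

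Your plan instead starts from a hypothetical completion $D$ in which $e'_i$ \emph{is} a tree edge and then tries to reroute. But the conclusion of step (iii) is false when $x_i$ is a proper descendant of $w$. In such a $D$, the root-to-$y$ path is $\text{root}\to\cdots\to w\to y'\to\cdots\to y$, and every vertex from $y'$ onward lies in $C_i$; since $x_i\in T$, the vertex $x_i$ does \emph{not} lie on this path. Conversely, $y\notin T$ cannot lie on the root-to-$x_i$ path, which is entirely inside $T$. Hence $x_i$ and $y$ sit in \emph{different} subtrees below $w$, and $e_i=(x_i,y)$ is a \emph{cross} edge in $D$ --- contradicting that $D$ is a DFS tree. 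In other words, no valid DFS completion of $T$ ever uses $e'_i$ as a tree edge (when $x_i\neq w$), so the ``rerouting'' you plan in step (iv) is built on an impossible premise.

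Had you recognized step (iii) as yielding the contradiction you announced in step (i), you would have a valid (if roundabout) argument: $e'_i$ is never a tree edge in any completion, hence always a back edge, hence safely ignored. But the paper's direct argument --- attach $C_i$ at $x_i$ and observe that $e'_i$ is then a back edge --- is simpler and is what the situation actually calls for.
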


Ignoring $e'_i$ during the DFS traversal, as stated in the components property, is justified 
because $e'_i$ will appear as a back edge in the resulting DFS tree 
(refer to Figure~\ref{fig:component-property}). For each component $C_i$, 
the edge $e_i$ can be found using a single pass over all the graph edges.

\section{Simple Algorithms}
\label{sec:alg1}
We shall first briefly describe the trivial algorithm to compute a DFS tree of a 
(directed) graph using $n$ passes. Since we are limited to have only $O(n~poly\log n)$ space, 
we cannot store the adjacency list of the
vertices in the graph. Recall that in the standard DFS algorithm~\cite{Tarjan72}, 
after visiting a vertex $v$, we choose any unvisited neighbour of $v$ and visit it.
If no neighbour of $v$ is unvisited, the traversal retreats back to the 
parent of $v$ and look for its unvisited neighbour, and so on. 

In the streaming model, we can use the same algorithm.
However, we do not store the adjacency list of a vertex.
To find the unvisited neighbour of each vertex, we perform a complete pass over the edges in $E$. 
The algorithm only stores the partially built DFS tree and the status of each vertex 
(whether it is visited/added to $T$). 
Thus, for each vertex $v$ (except $r$) one pass is performed to add $v$ to $T$ and 
another is performed before retreating to the parent of $v$.
Hence, it takes $2(n-1)$ passes to complete the algorithm 
since $T$ is initialized with the root $r$. 
Since, this procedure essentially simulates the standard DFS algorithm~\cite{Tarjan72}, 
it clearly satisfies the invariants ${\cal I}_1$ and ${\cal I}_2$.

This procedure can be easily transformed to require only $n-1$ passes by 
avoiding an extra pass for retreating from each vertex $v$.
In each pass we find an edge $e$ (from the stream) from the unvisited vertices, $V'$, to  
the lowest vertex on the ancestor-descendant path connecting $r$ and $v$, i.e., closest to $v$. 
Hence $e$ would be an edge from the lowest (maximum level) ancestor of $v$ (not necessarily proper) 
having at least one unvisited neighbour. 
Recall that if $v$ does not have an unvisited neighbour we move to processing its 
parent, and so on until we find an ancestor having an unvisited neighbour. 
We can thus directly add the edge $e$ to $T$.
Hence, retreating from a vertex would not require an additional pass and 
the overall procedure can be completed in 
$n-1$ passes, each pass adding a new vertex to $T$. 
Moreover, this also requires $O(1)$ processing time per edge and 
extra $O(n)$ time at the end of the pass, to find the relevant ancestor.
Refer to Procedure~\ref{alg:simple1} in Appendix~\ref{sec:pseudoCode} for the pseudocode of the procedure.
Thus, we get the following result.

\begin{theorem}
Given a directed/undirected graph $G$, a DFS tree of $G$ can be computed by a 
semi-streaming algorithm in $n$ passes using $O(n)$ space, using $O(m)$ time per pass.
\end{theorem}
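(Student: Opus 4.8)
The statement merely formalizes the algorithm already sketched above, so the plan is to verify three things in turn: that $n$ passes suffice, that each pass runs in the claimed time and space, and that the tree produced is a DFS tree. The complexity claims are immediate. After augmenting $G$ with the dummy root $r$ adjacent to every vertex, $T$ starts as the single vertex $r$ and each pass appends exactly one vertex, so at most $n$ passes occur. At the start of a pass I would mark, in $O(n)$ time, the vertices on the ancestor--descendant path $P$ from $r$ to the vertex $v$ added in the previous pass, recording their levels; then scanning the stream, each edge is handled in $O(1)$ time by keeping the edge $(u,w)$ with the largest $level(w)$ among those having $u\in V'$ and $w\in P$; and at the end of the pass $u$ is attached as a child of $w$ and $v$ is updated to $u$. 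This is $O(m+n)$ time per pass, and the only stored data are $T$ together with the visited/ancestor/level arrays, all of size $O(n)$. Note that if $V'\neq\emptyset$ such an edge always exists, since $r\in P$ is adjacent to every unvisited vertex; hence the process halts precisely when $T=V$.

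For correctness I would argue by induction that the algorithm maintains the invariants $\mathcal{I}_1$ and $\mathcal{I}_2$ together with the auxiliary invariant $\mathcal{I}_3$: \emph{every edge of $G$ from an unvisited vertex to $T$ is incident, on the $T$-side, to a vertex of the current path $P$ from $r$ to the most recently added vertex $v$.} All three hold trivially when $T=\{r\}$. Assume they hold, and let $(u,w)$ be the edge chosen in the current pass, with $u$ lying in a component $C$ of $G'$. Since $w$ is by construction the \emph{lowest} vertex of $P$ adjacent to $V'$, every edge from $V'$ to $P$ is incident to $w$ or to an ancestor of $w$; combined with $\mathcal{I}_3$, every edge from $G'$ to $T$ lands on the portion of $P$ from $r$ to $w$. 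After attaching $u$ below $w$, the new bottom vertex is $u$ and the new relevant path $P'$ is the $r$-to-$w$ portion of $P$ followed by $u$. Then: (i) for a component $C'\neq C$ of the old $G'$, its edges to $T$ are unchanged (removing $u$ cannot affect $C'$) and, as just observed, lie on the $r$-to-$w$ part of $P$, hence on $P'$, so $\mathcal{I}_2$ and $\mathcal{I}_3$ persist for $C'$; (ii) for each component $D$ of $C\setminus\{u\}$, its edges to $T$ are old $C$--$T$ edges (on the $r$-to-$w$ part of $P$) together with edges to the new leaf $u$, hence all on $P'$, again preserving $\mathcal{I}_2$ and $\mathcal{I}_3$; (iii) the new non-tree edges inside $T$ are exactly the edges from $u$ to other tree vertices, and by the above these go to $w$ or ancestors of $w$, i.e.\ to ancestors of $u$ in the new tree, so they are back edges and $\mathcal{I}_1$ is preserved. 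When the algorithm halts, $V'=\emptyset$, so $\mathcal{I}_1$ says every non-tree edge of $G$ is a back edge of $T$; by the characterization of DFS trees recalled in Section~\ref{sec:prelim}, $T$ is a DFS tree of the augmented graph, and hence (restricting to the child subtrees of $r$) of the original $G$. For a directed input the same routine is literally a pass-based simulation of the standard DFS traversal --- reading edges from the stream in place of consulting adjacency lists, and retreating several levels at once when the bottom vertex has no unvisited out-neighbour --- so the resulting tree is a valid DFS tree in that case too.

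The only genuinely delicate point --- and the one I would flag as the crux --- is why selecting the \emph{lowest} qualifying ancestor $w$, rather than an arbitrary one, is safe: it is exactly this choice that lets a single pass perform all the implicit retreats past exhausted ancestors while guaranteeing that no unvisited component is stranded strictly below the attachment point, which is what keeps $\mathcal{I}_2$ and $\mathcal{I}_3$ intact. Turning that interaction between the single global path $P$ and the per-component structure of $\mathcal{I}_2$ into the case analysis (i)--(iii) above is the substance of the argument; the remaining details (initialization, termination, and the $O(n)$ per-pass bookkeeping) are routine.
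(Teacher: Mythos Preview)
Your proposal is correct and follows the same algorithm the paper describes (Procedure~\ref{alg:simple1}): in each pass select the edge from $V'$ to the deepest vertex on the current root--to--$v$ path and attach there. The paper's own justification is considerably terser---it simply notes that the procedure ``essentially simulates the standard DFS algorithm'' and therefore satisfies $\mathcal I_1$ and $\mathcal I_2$---whereas you unpack this into an explicit induction via the auxiliary invariant $\mathcal I_3$ (all $V'$--to--$T$ edges land on the single active path $P$). That invariant is exactly what makes the ``simulates DFS'' claim rigorous, so your argument is a faithful elaboration of the paper's, not a different route; your case split (i)--(iii) and the observation that choosing the \emph{lowest} qualifying $w$ is what collapses the retreats while preserving $\mathcal I_3$ are precisely the content the paper leaves implicit.
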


\subsection{Improved algorithm}
\label{sec:alg2}
We shall now describe how this simple algorithm can be improved to compute a DFS tree 
of an undirected graph in $h$ passes, where $h$ is the height of the computed DFS tree. 
The main idea behind this approach is that each component of the unvisited graph $G'$ 
will constitute a separate subtree of the final DFS tree. 
Hence each such subtree can be computed independent of each other 
in parallel (this idea was also used by \cite{Khan17}).

Using one pass over edges in $E$, the components of the unvisited graph $G'$ 
can be found by using Union-Find algorithm~\cite{Tarjan75,TarjanL84} on the edges $E'$ of $G'$. 
Now, using the \textit{components} property we know that it is sufficient 
to add the lowest edge from each component to the DFS tree $T$.
At the end of the pass, for each component $C$ we find the edge $(x_C,y_C)$ incident from
the lowest vertex $x_C\in T$ to some vertex $y_C \in V_C$ and add it to $T$.
Note that in the next pass, for each component of $C\setminus\{y_C\}$ the lowest edge 
connecting it to $T$ would necessarily be incident on $y_C$ as $C$ was connected. 
Hence, instead of lowest edge incident on $T$, 
we store $e_y$ from $y\in V'$ only if $e_y$ is incident on some leaf of $T$.
Refer to Procedure~\ref{alg:simple} in Appendix~\ref{sec:pseudoCode} for the pseudocode of the algorithm.

To prove the correctness of the algorithm, we shall prove using induction that 
the invariants ${\cal I}_1$ and ${\cal I}_2$ hold over the passes performed on $E$. Since $T$ is initialized as an 
isolated vertex $r$, both invariants trivially hold. 
Now, let the invariants hold at the beginning of a pass.
Using $\II_2$, each component $C$ can have edges to a single ancestor-descendant path from $r$ to $x_C$. 
Thus, adding the edge $(x_C,y_C)$ for each component $C$, would not violate $\II_1$ at the
end of the pass,
given that $\II_1$ holds at the beginning of the pass.
Additionally, from each component $C$ we add a single vertex $y_C$ as a child of $x_C$ to $T$. 
Hence for any component of $C\setminus\{y_C\}$, the edges to $T$ can only be to ancestors of 
$y_C$ (using $\II_2$ of previous pass), and an edge necessarily to $y_C$,
satisfying $\II_2$ at the end of the pass.
Hence, using induction both $\II_1$ and $\II_2$ are satisfied 
proving the correctness of our algorithm.
 
Further, since each component $C$ in any $i^{th}$ pass necessarily has an edge to  
a leaf $x_C$ of $T$, the new vertex $y_C$ is added to the 
$i^{th}$ level of $T$. This also implies that every vertex at $i^{th}$ level of the 
final DFS tree is added during the $i^{th}$ pass. 
Hence, after $h$ passes we get a DFS tree of the whole graph as $h$ is the height
of the computed DFS tree. Now, the total time\footnote{
The Union-Find algorithm~\cite{Tarjan75,TarjanL84} requires $O(m\alpha(m,n))$ time, where $\alpha(m,n)$
is the inverse Ackermann function} 
required to compute the connected components is $O(m\alpha(m,n))$. And computing an edge from each unvisited vertex 
to a leaf in $T$ requires $O(1)$ time using $O(n)$ space. Thus, we have the following theorem.

\begin{theorem}
Given an undirected graph $G$, a DFS tree of $G$ can be computed by a 
semi-streaming algorithm in $h$ passes using $O(n)$ space, where $h$ is the height of the computed
DFS tree, using $O(m\alpha(m,n))$ time per pass.
\end{theorem}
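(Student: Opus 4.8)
The plan is to build $T$ one level per pass, exploiting the fact that distinct connected components of the unvisited graph $G'$ will become disjoint subtrees of the final DFS tree and can therefore be grown independently and in parallel. First I would fix the per-pass routine: during one pass over $E$, run Union--Find on the edges internal to $G'$ to obtain its components $C_1,\dots,C_k$, and simultaneously, for each unvisited vertex $y$, keep the most recently seen edge $e_y$ from $y$ to a \emph{leaf} of the current tree $T$ (if any such edge exists). At the end of the pass, for every component $C$ select one stored edge $(x_C,y_C)$ with $x_C$ a leaf of $T$ and $y_C\in V_C$, and attach $y_C$ to $T$ as a child of $x_C$. Storing a leaf-incident edge rather than the globally lowest edge into $T$ is the right choice: once $y_C$ is added, the \emph{components} property (Lemma~\ref{lemma:L()}) guarantees that the lowest edge from any residual piece of $C\setminus\{y_C\}$ into $T$ is incident on $y_C$, which is then the unique leaf of that branch, so no information is lost.

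Next I would prove correctness by showing the invariants $\II_1$ and $\II_2$ are maintained, by induction on passes. The base case is immediate since $T$ starts as the isolated root $r$. For the step, assume both hold at the start of a pass. By $\II_2$, all edges from a component $C$ into $T$ lie on the ancestor--descendant path from $r$ to $x_C$, so adding the tree edge $(x_C,y_C)$ keeps every such edge a back edge; combined with the inductive $\II_1$, this re-establishes $\II_1$ after the pass. For $\II_2$: each residual component $C'$ of $C\setminus\{y_C\}$ was connected to $y_C$ inside $C$, hence has at least one edge to $y_C$, and by the previous $\II_2$ its only other edges into the old $T$ go to ancestors of $x_C$, i.e.\ ancestors of $y_C$; thus all of $C'$'s edges into the new $T$ lie on one ancestor--descendant path ending at $y_C$. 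Components untouched by this pass trivially still satisfy $\II_2$.

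Then I would settle the pass count. Since $x_C$ is always a leaf of $T$ at attachment time, a short induction shows that exactly the depth-$i$ vertices of the final tree are added in pass $i$: every component at the start of pass $i$ is either a fresh split-off of a component handled in pass $i-1$ (with an edge to the newly created leaf at level $i-1$) or, at pass $1$, a component of $G$ attached to $r$; so $y_C$ always lands at level $i$. Hence the algorithm halts after $h$ passes, $h$ being the height of the tree it produces, and the final $T$ is a valid DFS tree by $\II_1$ (a rooted spanning tree whose non-tree edges are all back edges is a DFS tree). For the resources: each pass does Union--Find over $O(m)$ edges in $O(m\alpha(m,n))$ time, spends $O(1)$ per edge refreshing the candidate leaf-edges, and $O(n)$ at the end to attach one vertex per component; the stored state is the partial tree, a visited flag per vertex, and one candidate edge per unvisited vertex — all $O(n)$ space. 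The step I expect to need the most care is the $\II_2$ bookkeeping when a component fragments after its representative is attached: making sure every fragment still sees a single ancestor--descendant path, which relies precisely on $C$ having been connected and on our having stored a \emph{leaf}-incident edge rather than an arbitrary edge into $T$.
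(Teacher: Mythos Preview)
Your proposal is correct and follows essentially the same approach as the paper: you use the identical per-pass routine (Union--Find on $G'$ plus storing one leaf-incident edge per unvisited vertex), the same inductive maintenance of $\II_1$ and $\II_2$, and the same argument that pass $i$ adds exactly the level-$i$ vertices so that $h$ passes suffice. Your justification for why every component has a leaf-incident edge and why storing only leaf-incident edges loses nothing matches the paper's reasoning via the components property.
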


\section{Computing DFS in sublinear number of passes}
\label{sec:alg3}
Since a DFS tree may have $O(n)$ height, we cannot hope to compute
a DFS tree in sublinear number of passes using the previously described simple algorithms.  
The main difference between the advanced approaches and the simple algorithms
is that, in each pass instead of adding a single vertex (say $y$) to the DFS tree, 
we shall be adding an entire path (starting from $y$) to the DFS tree. 
The DFS traversal gives the flexibility to chose the next vertex
to be visited as long as the DFS property is satisfied, i.e., invariants $\II_1$ and $\II_2$
are maintained.

Hence in each pass we do the following for every component $C$ in $G'$. 
Instead of finding a single edge $(x_C,y_C)$ (see Section~\ref{sec:alg2}),
we find a path $P$ starting from $y_C$ in $C$ and attach this entire path $P$ to $T$ 
(instead of only $y_C$). Suppose this splits the component $C$ into components 
$C_1,C_2,\dots$ of $C\setminus P$. Now, each $C_i$ would have an edge to at least 
one vertex on $P$ (instead of necessarily the leaf $x_C$ in Section~\ref{sec:alg2}) 
since $C$ was a connected component. Hence in this algorithm for each $C_i$,
we find the vertex $y_i$ which is the lowest vertex of $T$ (or $P$) to which an edge from $C_i$
is incident. Observe that $y_i$ is unique since all the neighbours of $C_i$ in $T$ 
are along one path from the root to a leaf. 
Using the {\em components} property, the selection of $y_i$ as the parent of the root of the subtree to be computed for $C_i$ ensures that invariant ${\cal I}_2$ continues to hold.
Thus, in each pass from every component of the unvisited graph, 
we shall extract a path and add it to the DFS tree $T$.

This approach thus allows $T$ to grow by more than one vertex in each pass which is essential
for completing the tree in $o(n)$ passes. If in each pass we add a path of length at least $k$ 
from each component of $G'$, then the tree will grow by at least $k$ vertices in each pass, 
requiring overall $\lceil n/k \rceil$ passes to completely build the DFS tree.
We shall now present an important  property of any DFS tree of an undirected graph, 
which ensures that in each pass we can find a path of length at least $k\geq m/n$ (refer to Appendix~\ref{sec:proofs} for proof).


\begin{lemma}[Min-Height Property]
Given a connected undirected graph $G$ having $m$ edges, any DFS tree of $G$ 
from any root vertex necessarily has a height $h\geq m/n$.
\label{lem:dfsHeight} 
\end{lemma}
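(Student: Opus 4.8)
The plan is to exploit the defining property of a DFS tree recalled in Section~\ref{sec:prelim}: in any DFS tree $T$ of an undirected graph, every non-tree edge is a back edge, i.e.\ it joins a vertex to one of its proper ancestors. Fix such a DFS tree $T$ of $G$ rooted at an arbitrary vertex, and let $h$ be its height. First I would set up a charging scheme that assigns each edge of $G$ to its deeper endpoint: a tree edge $(par(v),v)$ is charged to $v$, and a back edge $(u,v)$ with $u$ an ancestor of $v$ is charged to $v$. Because $T$ is a DFS tree, every one of the $m$ edges is of one of these two types, so each edge is charged to exactly one vertex.

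Next I would bound the number of edges charged to a fixed vertex $v$. Every edge charged to $v$ joins $v$ to an ancestor of $v$ in $T$, and the ancestors of $v$ are exactly the $level(v)$ vertices lying strictly above $v$ on the path from $root(T)$ to $v$ (those at levels $0,1,\dots,level(v)-1$). Since $G$ is a simple graph, at most one edge joins $v$ to each such ancestor, so at most $level(v)$ edges are charged to $v$.

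Summing the charge over all vertices then gives $m \le \sum_{v\in V} level(v)$. Since $level(v)\le h$ for every vertex $v$, this immediately yields $m\le nh$, and therefore $h\ge m/n$, which is the claim. (One could even discard the root, which receives no charge, to obtain the slightly stronger $m\le (n-1)h$, but the stated bound already follows.)

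I do not expect a genuine obstacle here; the only points that need care are that the ``charge to the deeper endpoint'' map is well defined on \emph{every} edge — which is precisely the back-edge characterization of DFS trees — and that the number of ancestors of $v$ equals $level(v)\le h$. The connectivity hypothesis is used only to guarantee that $T$ is a spanning tree on all $n$ vertices; for a disconnected graph one would apply the bound inside each connected component, which is consistent with the dummy-root augmentation described in Section~\ref{sec:prelim}.
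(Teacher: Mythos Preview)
Your proof is correct and follows essentially the same argument as the paper: both charge each edge to its lower (deeper) endpoint, use the DFS back-edge property to conclude that all edges charged to $v$ go to ancestors of $v$, bound this by $h$, and sum to get $m\le nh$. Your presentation is slightly more detailed (making explicit the intermediate bound $m\le\sum_v level(v)$ and the role of simplicity and connectivity), but the underlying idea is identical.
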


\subsection{Algorithm}
We shall now describe our algorithm to compute a DFS tree of the input graph in $o(n)$ passes.
Let the maximum space allowed for computation in the semi-streaming model be $O(nk)$.
The algorithm is a recursive procedure that computes a DFS tree of a component $C$
from a root $r_C$. For each component $C$ we maintain a spanning tree $T_C$ of $C$. 
Initially we can perform a single pass over $E$ to compute a spanning tree of the connected
graph $G$ (recall the assumption in Section~\ref{sec:prelim}) using the Union-Find algorithm. 
For the remaining components, its spanning tree would already have been computed and passed
as an input to the algorithm.


We initiate a pass over the edge in $E$ and store the first $|V_C|\cdot k$ edges (if possible) 
from the component $C$ in the memory as the subgraph $E'_C$. 
Before proceeding with the remaining stream, we use any algorithm for computing a DFS tree 
$T'_C$ rooted at $r_C$ in the subgraph containing edges from $T_C$ and $E'_C$. 
Note that adding $T_C$ to $E'_C$ is important to ensure that subgraph induced by 
$T_C\cup E'_C$ is connected. In case the pass was completed before $E'_C$ exceeded
storing $|V_C|\cdot k$ edges, $T'_C$ is indeed a DFS tree of $C$ and we directly add it to $T$.
Otherwise, we find the longest path $P$ from $T'_C$ starting from $r_C$, i.e., path from $r_C$ 
to the farthest leaf. The path $P$ is then added to $T$.

Now, we need to compute the connected components of $C\setminus P$ and the new corresponding root
for each such component. We use the Union-Find algorithm to compute these components, 
say $C_1,...,C_f$, and compute the lowest edge $e_i$ from each $C_i$ on the path $P$. 
Clearly, there exist such an edge as $C$ was connected. In order to find these components 
and edges, we need to consider all the edges in $E_C$, which can be done by first considering 
$E'_C$ and then each edge from $C$ in the remainder of input stream of the pass. 
Refer to Procedure~\ref{alg:advAlg1} in Appendix~\ref{sec:pseudoCode} for the pseudocode of the algorithm.

%
%
%
%
%

Using the components property, choosing the new root $y_i$ corresponding to the lowest edge 
$e_i$ ensures that the invariant $\II_2$ and hence $\II_1$ is satisfied.
Now, in case $|E_C|< |V_C|\cdot k$, the entire DFS tree of $C$ is constructed and added to $T$ in a single pass.
Otherwise, in each pass we add the longest path $P$ from $T'_C$ to the final DFS tree $T$. 
Since $|E'_C|=|V_C|\cdot k$ and $E'_C\cup T_C$ is a single connected component,
the {\em min-height property} ensures that the height of any such $T'_C$ (and hence $P$) is at least $k$.
Since in each pass, except the last, we add at least $k$ new vertices to $T$, 
this algorithm terminates in at most $\lceil n/k \rceil$ passes.
Now, the total time required to find the components of the unvisited graph is again $O(m\alpha(m,n))$.
The remaining operations clearly require $O(|E_C|)$ time for a component $C$, requiring overall $O(m)$ time.
Thus, we get the following theorem. 


\newtheorem*{thmAdvAlg}{Theorem~\ref{thm:AdvAlg}}
\begin{thmAdvAlg}
Given an undirected graph $G$, a DFS tree of $G$ can be computed by a 
semi-streaming algorithm in at most $\lceil n/k \rceil$ passes using $O(nk)$ space,
requiring $O(m\alpha(m,n))$ time per pass.
\end{thmAdvAlg}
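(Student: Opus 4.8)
The plan is to check that the recursive procedure described above delivers all four guarantees of the theorem: (i) the tree $T$ it returns is a genuine DFS tree of $G$, (ii) it runs in at most $\lceil n/k\rceil$ passes, (iii) it never stores more than $O(nk)$ data, and (iv) each pass takes $O(m\alpha(m,n))$ time. As in the simple algorithms, correctness is obtained by showing that the invariants $\II_1$ and $\II_2$ hold at the start of every pass, by induction on the passes; since $T$ starts as the isolated root $r$, the base case is vacuous, so the whole content is the inductive step. Note also that the spanning tree of $G$ needed to bootstrap the first recursive call can be computed by Union--Find \emph{during} the pass that stores the first $nk$ edges, so bootstrapping costs no extra pass.

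For the inductive step I would fix a pass at whose start $\II_1,\II_2$ hold and argue for a single active component $C$ of the current unvisited graph, with root $r_C$; by $\II_2$ all edges from $C$ into the current $T$ lie on one ancestor-descendant path ending at $x_C=par(r_C)$. The algorithm builds a DFS tree $T'_C$ of the stored subgraph $T_C\cup E'_C$ rooted at $r_C$ and hangs its root-to-farthest-leaf path $P$ below $x_C$. The key points are: $P$ is itself an ancestor-descendant path, so every edge of $C$ with both endpoints on $P$ is automatically a back edge, and every edge of $C$ reaching the old part of $T$ reaches an ancestor of $r_C$ (hence of all of $P$) by $\II_2$; together these show no cross edge is created among $T\cup P$, so $\II_1$ is preserved. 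Then, for each connected component $C_i$ of $C\setminus P$, connectedness of $C$ forces an edge from $C_i$ to $P$, and $\II_2$ confines its remaining edges into $T$ to ancestors of $r_C$; thus all neighbours of $C_i$ in the new $T$ lie on a single ancestor-descendant path, there is a unique lowest such neighbour $y_i$ (which lies on $P$), and the \emph{components property} (Lemma~\ref{lemma:L()}) guarantees that adopting $y_i$ as the parent of $C_i$'s future subtree preserves $\II_2$ (and, eventually, $\II_1$). I expect this step to be the main obstacle, precisely because $T'_C$ is a DFS tree only of the \emph{stored} subgraph $T_C\cup E'_C$, not of all of $C$, so the argument must rely purely on the chain structure of $P$ and on the two invariants rather than on any DFS property of $T'_C$ within $C$.

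For the pass count I would invoke the Min-Height Property (Lemma~\ref{lem:dfsHeight}): whenever a component $C$ is not fully resolved in a pass, exactly $|V_C|\cdot k$ edges were stored on its $|V_C|$ vertices, so any DFS tree of $T_C\cup E'_C$ — in particular $T'_C$ — has height at least $k$, and hence $P$ adds at least $k$ fresh vertices to $T$. Since the algorithm is unfinished after a pass only if some component survived that pass, and a surviving component must have contributed $\geq k$ vertices, every pass but the last adds at least $k$ vertices to $T$; as $T$ acquires exactly $n$ vertices in all, the number $p$ of passes satisfies $(p-1)k<n$, i.e.\ $p\leq\lceil n/k\rceil$.

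Finally, for the resources: the procedure keeps only $T$ and the spanning trees $T_C$ (together $O(n)$) plus the stored edge sets $E'_C$, one per current component, each of size at most $|V_C|\cdot k$; since $\sum_C|V_C|\le n$ this is $O(nk)$ space. Within a pass, identifying the components of each $C\setminus P$ and their lowest edges $e_i$ is done by Union--Find over all $m$ edges in $O(m\alpha(m,n))$ time, while computing each $T'_C$ and extracting $P$, the $C_i$'s and the $y_i$'s is linear in the relevant stored edges and vertices, hence $O(m)$ in total; so each pass costs $O(m\alpha(m,n))$. Assembling (i)--(iv) gives Theorem~\ref{thm:AdvAlg}.
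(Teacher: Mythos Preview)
Your proposal is correct and follows essentially the same approach as the paper: correctness via the invariants $\II_1,\II_2$ and the components property, the pass bound via the Min-Height Property applied to $T_C\cup E'_C$, and the space/time bounds via $\sum_C |V_C|\le n$ and Union--Find. Your treatment is in fact more detailed than the paper's own (rather terse) argument; the only point to be slightly careful about is the bootstrap claim, since the spanning tree computed by Union--Find over the whole first pass is available only \emph{after} that pass and hence cannot be used mid-pass to form $T'_C$ --- but this is easily fixed (e.g., compute a DFS \emph{forest} of the first $nk$ stored edges directly and attach its longest path to $r$), and the paper itself is equally loose on this detail.
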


\begin{remark}
Since, Procedure~\ref{alg:advAlg1} adds an ancestor-descendant path for each component of $G'$, 
it might seem that the analysis of the algorithm is not tight for computing DFS trees with $o(n)$ height.
However, there exist a sequence of input edges where Procedure~\ref{alg:advAlg1} indeed takes $\Theta(n/k)$ passes
for computing a DFS tree with height $o(n)$ (see Appendix~\ref{apn:advAlg1WC}). 
\end{remark}

\section{Final algorithm}
\label{sec:alg4}

We shall now further improve the algorithm so that the required number of passes reduces 
to $\lceil h/k \rceil$  while it continues to use $O(nk)$ space, where $h$ is the height 
of the computed DFS tree and $k$ is any positive integer. To understand the main intuition 
behind our approach, let us recall the previously described algorithms. 
We first described a simple algorithm (in Section~\ref{sec:alg1}) in which every pass over 
the input stream adds one new vertex as the child of some leaf of $T$, which was improved 
(in Section~\ref{sec:alg2}) to simultaneously adding all vertices which are children of the 
leaves of $T$ in the final DFS tree. 
We then presented another algorithm (in Section \ref{sec:alg3}) in which every pass over the 
input stream adds one {\em ancestor-descendant} path of length $k$ or more, from each component 
of $G'$ to $T$. We shall now improve it by adding all the subtrees constituting the next $k$ 
levels of the final DFS tree starting from the leaves of the current tree $T$
(or fewer than $k$ levels if the corresponding component of $G'$ is exhausted).

Now, consider any component $C$ of $G'$. Let $r_C\in C$ be a vertex having an edge $e$ to a 
leaf of the partially built DFS tree $T$. The computation of $T$ can be completed by computing 
a DFS tree of $C$ from the root $r_C$, which can be directly attached to $T$ using $e$.
However, computing the entire DFS tree of $C$ may not be possible in a single pass over the 
input stream, due to the limited storage space available. Thus, using $O(n\cdot k)$ space we 
compute a special spanning tree $T_C$ for each component $C$ of $G'$ in parallel, 
such that the top $k$ levels of $T_C$ is same as the top $k$  levels of {\em some} DFS tree of $C$. 
As a result, in the $i^{th}$ pass all vertices on the levels $(i-1)\cdot k+1$ to $i\cdot k$ of 
the final DFS tree are added to $T$. This essentially adds a tree $T'_C$ representing the top $k$ 
levels of $T_C$ for each component $C$ of $G'$.  This ensures that our algorithm will terminate 
in $\lceil h/k \rceil$ passes, where $h$ is the height of the final DFS tree. Further, this special 
tree $T_C$ also ensures an \textit{additional} property, i.e., there is a one to one correspondence between 
the set of trees of $T_C\setminus T'_C$ and the components of $C\setminus T'_C$. 
In fact, each tree of $T_C\setminus T'_C$ is a spanning tree of the corresponding component. 
This property directly provides the spanning trees of the components of $G'$ in the next pass.

\subsubsection*{Special spanning tree $T_C$}
We shall now describe the properties of this special tree $T_C$ (and hence $T'_C$) which 
is computed in a single pass over the input stream. For $T'_C$ to be added to the DFS tree 
$T$ of the graph, a necessary and sufficient condition is that $T'_C$ satisfies the 
invariants ${\cal I}_1$ and ${\cal I}_2$ at the end of the pass. 
To achieve this we maintain $T_C$ to be a spanning tree of $C$, such that these invariants are 
maintained by the corresponding $T'_C$ throughout the pass as the edges are processed. 
Let $S_C$ be the set of edges already visited during the current pass, which have both endpoints in $C$.
In order to satisfy ${\cal I}_1$, no edge in $S_C$ should be a cross edge in $T'_C$, i.e., 
no edge having both endpoints in the top $k$ levels of $T_C$ is a cross edge. 
In order to satisfy ${\cal I}_2$, no edge in $S_C$ from any component $C'\in C\setminus T'_C$
to $C\setminus C'$ should be a cross edge in $T_C$.  Hence, using the {\em additional} property of $T_C$, 
each edge from a tree $\tau$ in $T_C\setminus T'_C$ to $T_C\setminus \tau$ is necessarily a back edge. 
This is captured by the two conditions of invariant ${\cal I}_T$ given below. 
Hence ${\cal I}_T$ should hold after processing each edge in the pass. 
Observe that any spanning tree, $T_C$, trivially satisfies ${\cal I}_T$ at  the beginning of the pass 
as $S_C=\emptyset$.\\

{\centering
        \fbox{\parbox{\linewidth}{
{\bf Invariant ${\cal I}_T:$}\\ 
$T_C$ is a spanning tree of $C$ with the top $k$ levels being $T'_C$ such that:
\begin{enumerate}[leftmargin=1cm]
\item[${\cal I}_{T_1}:$] All non-tree edges of $S_C$ having both endpoints in $T'_C$, are back edges.
\item[${\cal I}_{T_2}:$] For each tree $\tau$ in $T_C\setminus T'_C$, all the edges of $S_C$ from 
		$\tau$ to $T_C\setminus \tau$ are back edges. 
\end{enumerate} 
}}}
\vspace{.25em}

Thus, ${\cal I}_T$ is the local invariant maintained by $T_C$ during the pass, 
so that the global invariants $\II_1$ and $\II_2$ are maintained throughout the algorithm.
Now, in order to compute $T_C$ (and hence $T'_C$) satisfying the above invariant, we store 
a subset of  $S_C$ along with $T_C$. Let $H_C$ denote the (spanning) subgraph of $G$ formed 
by $T_C$ along with these additional edges. Note that all the 
edges of $S_C$ cannot be stored in $H_C$ due to space limitation of $O(nk)$.
Since each pass starts with the spanning tree $T_C$ of $C$ and $S_C=\emptyset$, initially $H_C=T_C$.
As the successive edges of the stream are processed, $H_C$ is updated if the input edge belongs
to the component $C$. We now formally describe $H_C$ and its properties.

\subsubsection*{Spanning subgraph $H_C$}
As described earlier, at the beginning of a pass for every component $C$ of $G'$,
$H_C = T_C$. Now, the role of $H_C$ is to facilitate
the maintenance of the invariant ${\cal I}_T$. In order to satisfy ${\cal I}_{T_1}$
and ${\cal I}_{T_2}$, we store in $H_C$ all the edges in $S_C$ that are incident on at least one vertex of $T'_C$. 
Therefore, $H_C$ is the spanning tree $T_C$ 
along with every edge in $S_C$ which has at least one endpoint in $T'_C$.
Thus, $H_C$ satisfies the following invariant throughout the algorithm.\\

{\centering
        \fbox{\parbox{\linewidth}{
{\bf Invariant ${\cal I}_H:$}\\
$H_C$ comprises of $T_C$ 
and all edges from $S_C$ that are incident on at least one vertex of $T'_C$.  }}}
\vspace{.25em}

We shall now describe a few properties of $H_C$ and then in the following section
show that maintaining ${\cal I}_H$ for $H_C$ is indeed sufficient to maintain the invariant ${\cal I}_T$
as the stream is processed. 
The following properties of $H_C$ are crucial to establish the correctness of our procedure to maintain 
$T_C$ and $H_C$ and establish a bound on total space required by $H_C$ (see Appendix~\ref{sec:proofs} for proofs).

\begin{lemma}
$T_C$ is a valid DFS tree of $H_C$.
\label{lem:dfsProp}
\end{lemma}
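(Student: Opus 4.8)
The plan is to prove that $T_C$ is a valid DFS tree of $H_C$ by verifying the necessary-and-sufficient condition stated in Section~\ref{sec:prelim}: a rooted spanning tree is a DFS tree iff every non-tree edge is a back edge. So I must show that every edge of $H_C \setminus T_C$ is a back edge with respect to $T_C$. By invariant ${\cal I}_H$, the non-tree edges of $H_C$ are exactly the edges of $S_C$ that are incident on at least one vertex of $T'_C$, so I need to argue each such edge joins two vertices in an ancestor-descendant relationship in $T_C$.

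First I would split the non-tree edges of $H_C$ into two cases according to how many of their endpoints lie in $T'_C$ (the top $k$ levels of $T_C$). Case (a): the edge has both endpoints in $T'_C$. Then invariant ${\cal I}_{T_1}$ applies directly and tells us this edge is a back edge. Case (b): the edge $e=(u,v)$ has exactly one endpoint, say $u$, in $T'_C$, while $v$ lies in some tree $\tau$ of $T_C \setminus T'_C$. Then $e$ is an edge from $\tau$ to $T_C \setminus \tau$, so invariant ${\cal I}_{T_2}$ applies and again tells us $e$ is a back edge. Since ${\cal I}_H$ guarantees there are no other non-tree edges in $H_C$, every non-tree edge of $H_C$ is a back edge with respect to $T_C$, and since $T_C$ is a spanning tree of $C$ (hence of $H_C$, as $H_C$ and $T_C$ have the same vertex set), $T_C$ is a valid DFS tree of $H_C$.

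The only subtlety — and the one step I'd want to state carefully rather than wave at — is making sure the two invariants ${\cal I}_{T_1}$ and ${\cal I}_{T_2}$ together cover \emph{all} the non-tree edges that ${\cal I}_H$ puts into $H_C$, with no gap. The edges stored in $H_C$ are those of $S_C$ incident on at least one vertex of $T'_C$; an edge incident on at least one vertex of $T'_C$ either has both endpoints in $T'_C$ (Case (a), covered by ${\cal I}_{T_1}$) or has its other endpoint outside $T'_C$, necessarily in exactly one tree $\tau$ of the forest $T_C \setminus T'_C$ (Case (b), covered by ${\cal I}_{T_2}$, since such an edge runs from $\tau$ to $T_C\setminus\tau$). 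So the case analysis is exhaustive. I do not anticipate a real obstacle here; the lemma is essentially an unpacking of the definitions of $H_C$ and ${\cal I}_T$, and the main thing is to invoke the characterization of DFS trees and the two sub-invariants in the right order.
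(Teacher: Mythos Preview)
Your proposal is correct and follows essentially the same approach as the paper: reduce to the characterization that a spanning tree is a DFS tree iff every non-tree edge is a back edge, use ${\cal I}_H$ to identify the non-tree edges of $H_C$ as those of $S_C$ touching $T'_C$, and then dispatch the two cases (both endpoints in $T'_C$ via ${\cal I}_{T_1}$, one endpoint outside via ${\cal I}_{T_2}$). Your explicit check that the case split is exhaustive is a slight elaboration over the paper's terser version, but the argument is the same.
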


\begin{lemma}
The total number of edges in $H_C$, for all the components $C$ of $G'$, is $O(nk)$. 
\label{lem:HCspace}
\end{lemma}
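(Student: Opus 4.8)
The plan is to bound the number of edges in each $H_C$ and then sum over all components. The key observation is that $H_C = T_C \cup (\text{edges of } S_C \text{ incident on } T'_C)$, so I need to count the edges of the graph that have at least one endpoint among the top $k$ levels of $T_C$. Since these are all edges of $G$ (restricted to $C$), it suffices to show that the number of such edges is $O(|V_C| \cdot k)$, because then summing over all components $C$ of $G'$ gives $\sum_C O(|V_C| \cdot k) = O(nk)$, using $\sum_C |V_C| \le n$.

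**First I would** handle the spanning-tree part: $T_C$ contributes $|V_C| - 1$ edges, and summed over all components this is at most $n - 1 = O(n) = O(nk)$. **Then** I would turn to the extra edges of $S_C$ stored in $H_C$, namely those incident on at least one vertex of $T'_C$. The crucial point is that by invariant ${\cal I}_T$ (specifically that $T_C$ is a spanning tree whose top $k$ levels $T'_C$ form a tree of height at most $k-1$, i.e.\ $T'_C$ has at most $k$ levels numbered $0$ through $k-1$ within $C$), the number of vertices in $T'_C$ is... not obviously bounded. So the real content must come from Lemma~\ref{lem:dfsProp}: since $T_C$ is a valid DFS tree of $H_C$, every non-tree edge of $H_C$ is a back edge, hence an ancestor-descendant edge in $T_C$.

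**The hard part will be** converting "$T_C$ is a DFS tree of $H_C$ and every stored extra edge touches $T'_C$" into a count of $O(|V_C|\cdot k)$. Here is the mechanism I expect to use: for each vertex $v \in V_C$, consider the back edges of $H_C$ incident on $v$ that go \emph{upward} to a strict ancestor of $v$. Charge each stored non-tree edge $(u,v)$ (with $u$ an ancestor of $v$) to its lower endpoint $v$. Because the edge is stored, at least one of $u,v$ lies in $T'_C$; since $u$ is higher than $v$, this forces $u \in T'_C$, so $u$ lies among the top $k$ levels. Thus the ancestor endpoint $u$ ranges over at most $k$ possible vertices on the root-to-$v$ path (one per level $0,\dots,k-1$), and by invariant ${\cal I}_1$/${\cal I}_{T_1}$ applied within $H_C$ there is at most one stored edge between $v$ and each such ancestor level (or: such parallel edges would be redundant and are discarded). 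Hence $v$ is charged at most $k$ times, giving at most $|V_C|\cdot k$ stored non-tree edges in $H_C$, and adding the $|V_C|-1$ tree edges yields $|H_C| = O(|V_C|\cdot k)$.

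**Finally** I would sum: $\sum_{C} |E(H_C)| \le \sum_C O(|V_C|\cdot k) = O(k \sum_C |V_C|) = O(nk)$, since the components of $G'$ are vertex-disjoint. The subtlety to nail down carefully is why at most one stored edge connects $v$ to each ancestor level — this should follow from the algorithm only retaining, among parallel candidate edges between a vertex and a given ancestor path position, a single representative (analogous to the components property used earlier), or alternatively from a direct appeal to Lemma~\ref{lem:dfsProp} plus the fact that two edges from $v$ to the same level-$j$ ancestor would be the same vertex. I would also double-check the edge case where an edge has \emph{both} endpoints in $T'_C$: such an edge is still a back edge (by ${\cal I}_{T_1}$), so the same charging to the lower endpoint applies without change.
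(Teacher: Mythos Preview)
Your approach is correct and essentially identical to the paper's: count the $O(|V_C|)$ tree edges of $T_C$, invoke Lemma~\ref{lem:dfsProp} to conclude every stored non-tree edge is a back edge whose higher endpoint lies in $T'_C$, charge each such edge to its lower endpoint, note that each vertex has at most $k$ ancestors in $T'_C$, and sum over components. The ``subtlety'' you flag is a non-issue: since $T_C$ is a tree, a vertex $v$ has exactly one ancestor at each level, and since $G$ is simple there is at most one edge between $v$ and that ancestor, so $v$ is charged at most $k$ times---no appeal to the components property or to the algorithm discarding parallel edges is needed.
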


\subsection{Processing of Edges}
We now describe how $T_C$ and $H_C$ are maintained while processing the edges of the 
input stream such that ${\cal I}_T$ and ${\cal I}_H$ are satisfied. 
Since our algorithm maintains the invariants 
${\cal I}_1$ and ${\cal I}_2$ (because of ${\cal I}_T$), we know that any edge whose both endpoints are not in some 
component $C$ of $G'$, is either a back edge or already a tree edge in $T$. Thus, we shall only discuss 
the processing of an edge $(x,y)$ having both endpoints in $C$ (now added to $S_C$), 
where $level(x)\leq level(y)$.


\begin{enumerate}
\item If $x\in T'_C$ then the edge is added to $H_C$ to ensure ${\cal I}_H$.
In addition, if $(x,y)$ is a cross edge in $T_C$ it violates either ${\cal I}_{T_1}$ (if $y\in T'_C$)
or ${\cal I}_{T_2}$ (if $y\notin T'_C$). 
Thus, $T_C$ is required to be restructured to ensure that ${\cal I}_T$ is satisfied. 

\item If $x\notin T'_C$ and if $x$ and $y$ belong to different trees in $T_C\setminus T'_C$, 
then it violates ${\cal I}_{T_2}$.
Again in such a case, $T_C$ is required to be restructured to ensure that ${\cal I}_T$ is satisfied. 
\end{enumerate}

Note that after restructuring $T_C$ we need to update $H_C$ such that ${\cal I}_H$ is satisfied.
Consequently any non-tree edge in $H_C$ that was incident on a vertex in original $T'_C$,
has to be removed from $H_C$ if none of its endpoints are in $T'_C$ after restructuring $T_C$,
i.e., one or both of its endpoints have moved out of $T'_C$. 
But the problem arises 
if a vertex moves into $T'_C$ during restructuring.
There might have been edges incident on such a vertex in $S_C$ and which were not stored in $H_C$.
In this case we need these edges in $H_C$ to satisfy ${\cal I}_H$, which is not possible
without visiting $S_C$ again. 
This problem can be avoided if our restructuring procedure ensures that no new vertex enters $T'_C$.
This can be ensured if the restructuring procedure follows the property of {\em monotonic fall}, i.e.,
the level of a vertex is never decreased by the procedure. 
Let $e$ be the new edge of component $C$ in the input stream. 
We shall show that in order to preserve the invariants ${\cal I}_T$ and ${\cal I}_H$
it is sufficient that the restructuring procedure maintains the property of {\em monotonic fall} and
ensures that the restructured $T_C$ is a DFS tree of $H_C+e$.

\begin{lemma}
On insertion of an edge $e$, any restructuring procedure which updates $T_C$ to be a valid DFS tree 
of $H_C + e$ ensuring {\em monotonic fall}, satisfies the invariants ${\cal I}_T$ and ${\cal I}_H$.
\label{lem:correctness}
\end{lemma}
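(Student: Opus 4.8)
The plan is to verify the two parts of $\mathcal{I}_T$ and the single invariant $\mathcal{I}_H$ separately, exploiting the hypothesis that the restructuring produces a DFS tree of $H_C+e$ together with monotonic fall. First I would recall the inductive setup: before the edge $e=(x,y)$ is processed, $T_C$ satisfies $\mathcal{I}_T$ and $H_C$ satisfies $\mathcal{I}_H$ with respect to the current $S_C$; after processing, $S_C$ gains $e$, and we must re-establish both invariants for the restructured $T_C$ and updated $H_C$. The key structural observation is the consequence of monotonic fall: since no vertex's level decreases, and $T'_C$ is exactly the top $k$ levels of $T_C$, \emph{no new vertex can enter $T'_C$} — a vertex in $T'_C$ may drop out (its level rising above $k$), but nothing previously below level $k$ can rise into the top $k$ levels. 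This is what makes the $H_C$ bookkeeping consistent, and I would state it as the first step.

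Next I would establish $\mathcal{I}_H$ for the updated $H_C$. By the monotonic-fall observation, the set $V(T'_C)$ after restructuring is a subset of $V(T'_C)$ before. The updating rule described in the text is: add $e$ to $H_C$ if an endpoint lies in the (old) $T'_C$, then delete from $H_C$ any non-tree edge neither of whose endpoints lies in the (new) $T'_C$. I need to argue this yields exactly $T_C \cup \{\,f \in S_C : f$ incident on new $T'_C\,\}$. The edges that should be in $H_C$ are those of $S_C\cup\{e\}$ incident on new $T'_C$, plus all of (new) $T_C$. For membership: any such edge $f\ne e$ is incident on new $T'_C \subseteq$ old $T'_C$, so it was in $H_C$ before (by old $\mathcal{I}_H$) and is not deleted; $e$ itself is incident on new $T'_C$, hence on old $T'_C$ (again by the subset property — any endpoint in new $T'_C$ was in old $T'_C$), so $e$ was added. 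For the converse, the explicit deletion step removes precisely the stale edges. Tree edges of new $T_C$: these are edges of $H_C+e$ (since $T_C$ is a DFS tree of $H_C+e$), so they were present as tree or non-tree edges and become tree edges; none get deleted since a DFS tree is a valid spanning tree. That closes $\mathcal{I}_H$.

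Then I would derive $\mathcal{I}_T$ from the fact that the restructured $T_C$ is a DFS tree of $H_C+e$. By the DFS characterization (every non-tree edge is a back edge, stated in Section~\ref{sec:prelim}), \emph{every} non-tree edge of $H_C+e$ is a back edge in $T_C$. For $\mathcal{I}_{T_1}$: a non-tree edge of $S_C$ with both endpoints in (new) $T'_C$ is incident on new $T'_C$, hence by $\mathcal{I}_H$ it lies in $H_C+e$, hence it is a back edge — done. For $\mathcal{I}_{T_2}$: let $\tau$ be a tree of (new) $T_C\setminus T'_C$ and let $f\in S_C\cup\{e\}$ go from $\tau$ to $T_C\setminus\tau$; I must show $f$ is a back edge. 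If $f$ is incident on $T'_C$ then as above it is in $H_C+e$ and is a back edge. The remaining case is $f$ joining two distinct trees of $T_C\setminus T'_C$, say $\tau$ and $\tau'$, with neither endpoint in $T'_C$; then $f$ is a cross edge in $T_C$ (the roots of $\tau,\tau'$ are distinct children-subtrees hanging below level $k$, so neither endpoint is an ancestor of the other). I need to rule this out. Here I would invoke Step 1 in the processing description: when such an edge $(x,y)$ with $x\notin T'_C$ arrives and $x,y$ lie in different trees of $T_C\setminus T'_C$, the procedure is triggered to restructure, and since the output $T_C$ is a DFS tree of $H_C+e$ — which now contains $f=e$ if $f$ is the new edge — $f$ becomes a back edge, contradiction; and if $f\ne e$ is an old edge of $S_C$, then $\mathcal{I}_{T_2}$ held before for the old $T_C$, and I must argue restructuring cannot create a new inter-tree cross edge among trees of $T_C\setminus T'_C$ out of an old edge. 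This is the delicate point.

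The main obstacle I anticipate is exactly this last case: ensuring that the old edges of $S_C$ not stored in $H_C$ (those with both endpoints strictly below level $k$, scattered across the subtrees $\tau$) do not become violating cross edges after restructuring. The resolution must be that such an edge $f=(u,v)$ had, before restructuring, both $u,v$ in the \emph{same} tree $\tau_{old}$ of $T_C\setminus T'_C$ (by old $\mathcal{I}_{T_2}$, since if they were in different trees $f$ would have been a forbidden cross edge then); and the restructuring only rearranges $T_C$ in a way driven by $H_C+e$, whose edges all live within the portion being restructured — so two vertices in a common component of $C\setminus T'_{C,old}$ that stay below level $k$ remain in a common tree of the new $T_C\setminus T'_C$, because trees of $T_C\setminus T'_C$ are precisely the connected components of $C$ minus the top $k$ levels (this is the \emph{additional} property asserted for $T_C$), and connectivity of $C\setminus(\text{new }T'_C)$ refines that of $C\setminus(\text{old }T'_C)$ only by removing vertices, which cannot put $u$ and $v$ — still both present and still connected through the unchanged part of $C$ — into different components. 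I would therefore need to first prove, or cite from the $T_C$-construction, that trees of $T_C\setminus T'_C$ coincide with connected components of the corresponding induced subgraph of $C$; with that in hand, the inter-tree-cross-edge case collapses and $\mathcal{I}_{T_2}$ follows, completing the lemma by induction over the edges of the pass.
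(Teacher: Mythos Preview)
Your overall structure matches the paper's proof closely: establish that monotonic fall makes new $T'_C$ a subset of old $T'_C$ (as vertex sets), derive $\mathcal{I}_H$ from this, get $\mathcal{I}_{T_1}$ immediately from $\mathcal{I}_H$ plus the DFS property, and then isolate as the only real difficulty the case of an edge $f\in S_C$ with both endpoints below level $k$ that is \emph{not} stored in $H_C$. Up to and including that identification you are essentially in step with the paper.

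The gap is in how you close that final case. You propose to argue via the ``additional property'' that trees of $T_C\setminus T'_C$ coincide with connected components of $C\setminus T'_C$, and then use that connectivity in $C\setminus(\text{new }T'_C)$ coarsens connectivity in $C\setminus(\text{old }T'_C)$. But invoking that correspondence for the \emph{new} $T_C$ is circular: that property is a consequence of $\mathcal{I}_{T_2}$ (indeed a strengthening of it, since it speaks of all edges of $C$ rather than just $S_C$), which is exactly what you are trying to establish. There is no independent way to ``cite it from the construction'' for the restructured tree, because the restructuring procedure only sees $H_C+e$, not all of $C$.

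The paper closes the gap differently, using only what is actually available, namely that $T_C$ is a DFS tree of $H_C+e$. Take the tree path $e_1,\dots,e_t$ in the old $\tau'$ connecting the two endpoints of $f$. Every vertex on this path lies outside old $T'_C$, hence (by monotonic fall) outside new $T'_C$. If the endpoints of $f$ landed in different trees of the new $T_C\setminus T'_C$, then some edge $e_i$ on the path would have its endpoints in different such trees, making $e_i$ a cross edge in the new $T_C$. But each $e_i$ is a tree edge of the old $T_C$, hence lies in $H_C$, contradicting that the new $T_C$ is a DFS tree of $H_C+e$. This path-in-$H_C$ argument is the missing idea; it replaces your appeal to component structure in $C$ with a finite walk that stays entirely inside $H_C$, where the DFS property can actually be applied.
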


\begin{proof}
The property of {\em monotonic fall} ensures that the vertex set of new $T'_C$ is a subset 
of the vertex set of the previous $T'_C$. Using ${\cal I}_H$ we know that any edge of $S_C$ 
which is not present in $H_C$ must have both its endpoints outside $T'_C$. Hence, 
{\em monotonic fall} guarantees that ${\cal I}_H$ continues to hold with respect to the new 
$T'_C$ for the edges in $S_C\setminus\{e\}$. Additionally, we save $e$ in the new $H_C$ 
if at least one of its endpoints belong to the new $T'_C$, ensuring that ${\cal I}_H$ 
holds for the entire $S_C$. 

Since the restructuring procedure ensures that the updated $T_C$ is a DFS tree of $H_C$, 
the invariant ${\cal I}_{T_1}$ trivially holds as a result of $\II_H$. In order to prove 
the invariant ${\cal I}_{T_2}$, consider any edge $e'\in S_C$ from a tree 
$\tau\in T_C\setminus T'_C$ to $T_C\setminus \tau$. Clearly, it will satisfy ${\cal I}_{T_2}$ if 
$e'\in H_C$, as $T_C$ is a DFS tree of $H_C+e$. In case $e' \notin H_C$, it must be internal to
some tree $\tau'$ in the original $T_C\setminus T'_C$ (using $\II_{T_2}$ in the original $T_C$). 
We shall now show that such an edge will remain internal to some tree in the updated 
$T_C\setminus T'_C$ as well, thereby not violating ${\cal I}_{T_2}$. Clearly the endpoints of 
$e'$ cannot be in the updated $T'_C$ due to the property of {\em monotonic fall}. 

Assume that the endpoints of $e'$ belong to different trees of updated $T_C \setminus T'_C$.
Now, consider the edges $e_1,...,e_t$ on the tree path in $\tau'$ connecting the endpoints of $e'$.
Since the entire tree path is in $\tau'$, the endpoints of each $e_i$ are not in original $T'_C$, 
ensuring that they are also not in the updated $T'_C$ (by {\em monotonic fall}). Since the endpoints
of $e'$ (and hence the endpoints of the path $e_1,...,e_t$) are in different trees in updated 
$T_C \setminus T'_C$, there must exist some $e_i$ which also has endpoints belonging to different 
trees of updated $T_C\setminus T'_C$. 
This makes $e_i$ a cross edge of the updated $T_C$. Since $e_i$ is a tree edge 
of original $T_C$, it belongs to $H_C$ and hence $e_i$ being a cross edge implies that 
the updated $T_C$ is not a DFS tree of $H_C+e$, which is a contradiction. 
Hence $e'$ has both its endpoints in the same tree of the updated $T_c\setminus T'_C$,
ensuring that ${\cal I}_{T_2}$ holds after the restructuring procedure. 
\end{proof}

Hence, any procedure to restructure a DFS tree $T_C$ of the subgraph $H_C$ on insertion of a 
{\em cross edge} $e$, that upholds the property of {\em monotonic fall} and returns a new $T_C$ 
which is a DFS tree of $H_C+e$, can be used as a {\em black box} in our algorithm. One such algorithm is the incremental 
DFS algorithm by Baswana and Khan~\cite{BaswanaK17}, which precisely fulfils our requirement. 
They proved the total update time of the algorithm to be $O(n^2)$. They also showed that any 
algorithm maintaining incremental DFS abiding {\em monotonic fall} would require $\Omega(n^2)$ time 
even for sparse graphs, if it explicitly maintains the DFS tree. If the height of the DFS tree is 
known to be $h$, these bounds reduces to $O(nh+n_e)$ and $\Omega(nh+n_e)$ respectively, 
where $n_e$ is the number of edges processed by the algorithm. Refer to Appendix~\ref{sec:rebuild} 
for a brief description of the algorithm. 


\subsection{Algorithm}
We now describe the details of our final algorithm which uses Procedure~\ref{alg:rebuild}~\cite{BaswanaK17} (described in Appendix~\ref{sec:rebuild}) for restructuring 
the DFS tree when a cross edge is inserted. 
Similar to the algorithm in Section~\ref{sec:alg3}, for each component $C$ of $G'$, a rooted spanning tree $T_C$ of the component is required as an input to the procedure having the root $r_C$. 

Initially $T=\emptyset$ and  $G'=G$ has a single component $C$, as $G$ is connected 
(recall the assumption in Section~\ref{sec:prelim}). Hence for the first pass,
we compute a spanning tree $T_C$ of $G$ using the Union-Find algorithm. Subsequently 
in each pass we directly get a spanning tree $T_{C'}$ for each component $C'$ of the new $G'$, 
which is the corresponding tree in $T_C\setminus T'_C$, 
where $C$ is the component containing $C'$ in the previous pass.
Also, observe that the use of these trees as the new $T_C$ ensures that the level of no vertex ever rises
in the context of the entire tree $T$. This implies that the level of any vertex starting 
with the initial spanning tree $T_G$ never rises, i.e., the entire algorithm satisfies the property 
of {\em monotonic fall}. We will use this fact crucially in the analysis of the time complexity.

As described earlier, we process the edges of the stream by updating the $T_C$ and $H_C$ maintaining
$\II_T$ and $\II_H$ respectively. In case the edge is internal to some tree in $T_C\setminus T'_C$ 
(i.e., have both endpoint in the same tree in $T_C\setminus T'_C$), we simply ignore the edge.
Otherwise, we add it to $H_C$ to satisfy $\II_H$. Further, Procedure~\ref{alg:rebuild} 
maintains $T_C$ to be a DFS tree of $H_C$, which restructures $T_C$ if the processed edge is added 
to $H_C$ and is a cross edge in $T_C$. 
Now, in case $T_C$ is updated we also update the subgraph $H_C$, by removing the extra 
non-tree edges having both endpoints in $T_C\setminus T'_C$. After the pass is completed, 
we attach $T'_C$ (the top $k$ levels of $T_C$) to $T$. Now, $\II_{T_2}$ ensures that each tree 
of $T_C\setminus T'_C$ forms the (rooted) spanning tree of the components of the new $G'$, 
and hence can be used for the next pass. Refer to Procedure~\ref{alg:advAlg2} in Appendix~\ref{sec:pseudoCode}
for the pseudocode of the algorithm.

\subsection{Correctness and Analysis}

The correctness of our algorithm follows from Lemma~\ref{lem:correctness}, which ensures 
that invariants ${\cal I}_{H}$ and ${\cal I}_{T}$ (and hence ${\cal I}_1$ and ${\cal I}_2$) 
are maintained as a result of using Procedure~\ref{alg:rebuild} which ensures {\em monotonic fall} of vertices. 
The total space used by our algorithm and the restructuring procedure is dominated 
by the cumulative size of $H_C$ for all components $C$ of $G'$, which is $O(nk)$ using Lemma~\ref{lem:HCspace}.
Now, in every pass of the algorithm, a DFS tree for each component $C$ of height $k$ is 
attached to $T$. These trees collectively constitute the next $k$ levels of the final DFS tree $T$. 
Therefore, the entire tree $T$ is computed in $\lceil h/k \rceil$ passes.

Let us now analyse the time complexity of our algorithm. In the first pass $O(m\alpha(m,n))$ 
time is required to compute the spanning tree $T_C$ using the Union-Find algorithm. 
Also, in each pass $O(m)$ time is required to process the input stream. Further, 
in order to update $H_C$ we are required to delete edges 
having both endpoints out of $T'_C$. Hence, whenever a vertex falls below the $k^{th}$ level, the edges 
incident on it are checked for deletion from $H_C$ (if the other endpoint is also not in $T'_C$).
Total time required for this is $O(\sum_{v\in V} deg(v)) = O(m)$ per pass. 
Now, Appendix~\ref{sec:rebuild} describes the details of Procedure~\ref{alg:rebuild} which 
maintains the DFS tree in total $O(nh+n_e)$ time, where $n_e=O(mh/k)$, for processing the entire
input stream in each pass. 

Finally, we need to efficiently answer the {\em query} whether an edge is internal to some tree in 
$T_C\setminus T'_C$. For this we maintain for each vertex $x$ its ancestor at level $k$ 
as $rep[x]$, i.e., $rep[x]$ is the root of the tree  in $T_C\setminus T'_C$ 
that contains $x$. If $level(x) < k$, then $rep[x] =x$. For an edge $(x,y)$ comparing the 
$rep[x]$ and $rep[y]$ efficiently answers the required query in $O(1)$ time. 
However, whenever $T_C$ is updated we need to update 
$rep[v]$ for each vertex $v$ in the modified part of $T_C$, requiring $O(1)$ time per vertex 
in the modified part of $T_C$. We shall bound the total work done to update $rep[x]$ of 
such a vertex $x$ throughout the algorithm to $O(nh)$ as follows.

Consider the potential function $\Phi=\sum_{v\in V} level(v)$. Whenever some part of $T_C$ is updated, 
each vertex $x$ in the modified $T_C$ necessarily incurs a fall in its level 
(due to {\em monotonic fall}). 
Thus, the cost of updating $rep[x]$ throughout the algorithm is proportional to the number of times
$x$ descends in the tree, hence increases the value of $\Phi$ by at least one unit.
Hence, updating $rep[x]$ for all $x$ in the modified part of $T_C$ can be accounted by the 
corresponding increase in the value of $\Phi$. Clearly, the maximum value of $\Phi$ is $O(nh)$, 
since the level of each vertex is always less than $h$, where $h$ is the height of the 
computed DFS tree. Thus, the total work done to update $rep[x]$ for all $x\in V$ is $O(nh)$.
This proves our main theorem described in Section~\ref{sec:OurResults} which is stated as follows.

\newtheorem*{thmMain}{Theorem~\ref{thm:FinalAlg}}
\begin{thmMain}
Given an undirected graph $G$, a DFS tree of $G$ can be computed by a 
semi-streaming algorithm using $\lceil h/k \rceil$ passes using $O(nk)$ space 
requiring amortized $O(m+nk)$ time per pass for any integer 
$k\leq h$, where $h$ is the height of the computed DFS tree.
\end{thmMain}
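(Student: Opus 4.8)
The plan is to assemble the theorem from the pieces already developed in Section~\ref{sec:alg4}: correctness via the invariants, the space bound via Lemma~\ref{lem:HCspace}, the pass count via the ``$k$ levels per pass'' structure, and the time bound by accounting the cost of the restructuring black box plus the bookkeeping for $rep[\cdot]$. First I would fix the black box to be Procedure~\ref{alg:rebuild} of Baswana and Khan~\cite{BaswanaK17}, which restructures a DFS tree on a cross-edge insertion while respecting \emph{monotonic fall} and returning a DFS tree of $H_C+e$; Lemma~\ref{lem:correctness} then immediately gives that ${\cal I}_T$ and ${\cal I}_H$ are maintained edge-by-edge within a pass, and since ${\cal I}_T$ implies ${\cal I}_1$ and ${\cal I}_2$ for the attached $T'_C$ at the end of each pass, an induction over passes (base case $T=\emptyset$, $G'=G$ connected with the initial Union--Find spanning tree $T_G$) shows the final $T$ is a valid DFS tree.

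Next I would establish the pass count. The defining property of $T_C$ is that its top $k$ levels coincide with the top $k$ levels of some DFS tree of $C$ (or fewer if $C$ is exhausted), and by the \emph{additional} property each tree of $T_C\setminus T'_C$ is exactly the spanning tree of a component of $C\setminus T'_C$. Hence in the $i^{th}$ pass exactly the vertices on levels $(i-1)k+1,\dots,ik$ of the final DFS tree are added to $T$, so the algorithm terminates after $\lceil h/k\rceil$ passes, and the spanning trees needed for the next pass are handed down for free. For space, the total storage is dominated by $\sum_C |H_C|$, which is $O(nk)$ by Lemma~\ref{lem:HCspace}; the $rep[\cdot]$ array and status bits add only $O(n)$.

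For the time bound I would split the per-pass work into four parts: (i) the first pass pays $O(m\alpha(m,n))$ for the initial Union--Find spanning tree; (ii) every pass spends $O(m)$ scanning the stream and classifying each edge (the ``internal to a tree of $T_C\setminus T'_C$'' query is answered in $O(1)$ by comparing $rep[x]$ with $rep[y]$), plus $O(m)$ amortized to prune from $H_C$ the edges whose endpoints have fallen out of $T'_C$, charged to $\sum_v deg(v)$; (iii) the restructuring black box runs in total $O(nh+n_e)$ over a whole pass, where $n_e=O(mh/k)$ is the number of edges it is fed, giving $O(nh/\lceil h/k\rceil + mh/(k\lceil h/k\rceil)) = O(nk+m)$ amortized per pass; (iv) maintaining $rep[x]$ when $T_C$ changes costs $O(1)$ per vertex that falls, and by the potential $\Phi=\sum_v level(v)$ — which never decreases under \emph{monotonic fall} and is bounded by $O(nh)$ — the total such work over the whole algorithm is $O(nh)$, i.e. $O(nk)$ amortized per pass. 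Summing, each pass costs amortized $O(m+nk)$, which is the claimed bound (the $\alpha(m,n)$ factor of the first pass is absorbed into the amortization over $\lceil h/k\rceil\ge 1$ passes, or simply noted as a one-time additive term).

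The main obstacle is part (iii): one must argue carefully that the restructuring procedure's global guarantee of $O(nh+n_e)$ total time genuinely applies here, since across passes the ``graph'' it operates on is not monotone in the naive sense — $H_C$ loses edges at pass boundaries and is re-seeded from $T_C\setminus T'_C$. The resolution is the observation already made in the text that the level of every vertex, measured in the global tree $T$ (with $T_C$ hung below it), never rises: the whole execution is a single \emph{monotonic fall} process starting from $T_G$, so the $\Omega(nh)$-type amortized analysis of~\cite{BaswanaK17} transfers intact, with $n_e$ the cumulative count of cross edges ever handed to the black box, which is $O(mh/k)$ since each of the $O(m)$ edges can be (re)examined once per pass and there are $\lceil h/k\rceil$ passes. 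I would make this bridging argument explicit and otherwise let the cited lemmas do the heavy lifting.
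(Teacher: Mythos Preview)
Your proposal is correct and follows essentially the same route as the paper's own argument in Section~\ref{sec:alg4}: correctness via Lemma~\ref{lem:correctness} and the monotonic-fall property of Procedure~\ref{alg:rebuild}, space via Lemma~\ref{lem:HCspace}, the $\lceil h/k\rceil$ pass count from the $k$-levels-per-pass structure, and the time bound decomposed into exactly the same four pieces (stream scan, $H_C$ pruning charged to $\sum_v\deg(v)$, the $O(nh+n_e)$ restructuring cost, and $rep[\cdot]$ maintenance via the potential $\Phi=\sum_v level(v)$). Your identification of the ``main obstacle''---that the $O(nh+n_e)$ guarantee must be argued globally across passes because levels never rise even at pass boundaries---is precisely the point the paper flags when it says the entire algorithm satisfies monotonic fall starting from the initial spanning tree $T_G$; one small wording slip is that the $O(nh+n_e)$ bound with $n_e=O(mh/k)$ is over \emph{all} passes, not ``over a whole pass'', but your amortization formula already treats it that way.
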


\noindent
\textbf{Remark: } Note that the time complexity of our algorithm is indeed tight for our framework.
Since our algorithm requires $\lceil h/k \rceil $ passes and any restructuring procedure following 
{\em monotonic fall} requires $\Omega(nh+n_e)$ time, each pass would require $\Omega(m+nk)$ time.

\section{Experimental Evaluation}
\label{sec:exp}
Most streaming algorithms deal with only $O(n)$ space, for which our advanced algorithms improve over the simple algorithms theoretically by just constant factors. However, their empirical performance demonstrates their significance in the real world applications. The evaluation of our algorithms on random and real graphs shows that in practice these algorithms require merely a {\em few} passes even when allowed to store just $5n$ edges. The results of our analysis can be summarized as follows (for details refer to Appendix~\ref{apn:expEval}).


The two advanced algorithms {\em kPath} (Algorithm~\ref{alg:advAlg1} in Section~\ref{sec:alg3}) and {\em kLev} (Algorithm~\ref{alg:advAlg2} in Section~\ref{sec:alg4} with an additional heuristic) perform much better than the rest even when $O(n)$ space is allowed. For both random and real graphs, {\em kPath} performs slightly worse as the density of the graph increases. On the other hand {\em kLev} performs slightly better only in random graphs with the increasing density. The effect of the space parameter is very large on {\em kPath} from $k=1$ to small constants, requiring very {\em few} passes even for $k=5$ and $k=10$. However, {\em kLev} seems to work very well even for $k=1$ and has a negligible effect of increasing the value of $k$. Overall, the results suggest using {\em kPath} if $nk$ space is allowed for $k$ being a small constant such as $5$ or $10$. However, if the space restrictions are extremely tight it is better to use {\em kLev}.

%

\section{Conclusion}
We presented the first $o(n)$ pass semi-streaming algorithm for computing a DFS tree 
for an undirected graph, breaking the long standing presumed barrier of $n$ passes.
In our streaming model we assume that $O(nk)$ local space is available for computation,
where $k$ is any natural number.
Our algorithm computes a DFS tree in $\lceil n/k \rceil$ passes.
We improve our algorithm to require only $\lceil h/k \rceil$ passes without any additional space
requirement, where $h$ is the height of the final tree. This improvement becomes significant 
for graphs having shallow DFS trees. Moreover, our algorithm is described as a {\em framework} 
using a restructuring algorithm as a black box. This allows more flexibility 
to extend our algorithm for solving other problems requiring a computation of DFS tree in the streaming
environment. 

Recently, in a major breakthrough Elkin~\cite{Elkin17} presented the first $o(n)$ pass algorithm 
for computing Shortest Paths Tree from a single source. 
Using $O(nk)$ local space, it computes the shortest path tree
from a given source in $O(n/k)$ passes for unweighted graphs and in $O(n\log n/k)$ passes
for weighted graphs.




Despite the fact that these breakthroughs provide only minor improvements 
(typically $poly\log n$ factors), they are significant steps to pave a path in better 
understanding of such fundamental problems in the streaming environment. 
These simple improvements come after decades of the emergence of 
streaming algorithms for graph problems, where such problems were considered implicitly hard in the 
semi-streaming environment.
We thus believe that our result is a significant improvement over the known algorithm for 
computing a DFS tree in the streaming environment, and it can be a useful step in more involved 
algorithms that require the computation of a DFS tree.
%

Moreover, the experimental evaluation of our algorithms revealed exceptional performance of the advanced algorithms {\em kPath} and {\em kLev} (greatly affected by the {\em additional heuristic}). Thus, it would be interesting to further study these algorithms theoretically which seem to work extremely well in practice.


\bibliographystyle{plain}
\bibliography{paper}

\appendix

\section{Omitted Proofs}
\label{sec:proofs}
\newtheorem*{THMdfsH}{Lemma~\ref{lem:dfsHeight}}
\begin{THMdfsH}[Min-Height Property]
Given a connected undirected graph $G$ having $m$ edges, any DFS tree of $G$ 
from any root vertex necessarily has a height $h\geq m/n$.
\end{THMdfsH}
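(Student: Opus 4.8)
The plan is to exploit the defining property of a DFS tree of an undirected graph: every non-tree edge is a back edge, so in fact \emph{every} edge of $G$ (tree edge or not) joins a vertex to one of its proper ancestors in the tree. So first I would fix an arbitrary DFS tree $T$ of $G$ of height $h$ rooted at the given vertex, and orient each edge $(u,v)\in E$ towards its deeper endpoint, i.e.\ assign it to whichever of $u,v$ has the larger level. This is well defined: since $u$ and $v$ are in an ancestor--descendant relation, they lie at distinct levels, so exactly one of them is ``lower''.

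Next I would bound, for a fixed vertex $v$, the number of edges assigned to $v$. Any such edge joins $v$ to a proper ancestor of $v$; the number of proper ancestors of $v$ is exactly $level(v)$, and since $G$ is simple there is at most one edge between $v$ and each of them. Hence at most $level(v)$ edges are assigned to $v$. Summing over all vertices, and using $level(v)\le h$ for every $v$ (with the root contributing $level(root)=0$), gives
\[
m=\sum_{v\in V}\#\{e\in E : e \text{ assigned to } v\}\ \le\ \sum_{v\in V} level(v)\ \le\ n h,
\]
and therefore $h\ge m/n$, as claimed.

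The argument is essentially a one-line double counting, so I do not expect any real obstacle; the only points needing care are (i) noting that the ancestor--descendant property of DFS trees applies uniformly to tree edges and non-tree (back) edges alike, which is precisely the characterization recalled in Section~\ref{sec:prelim}, and (ii) checking that the assignment counts each edge exactly once so that the first equality above is valid. (One in fact obtains the marginally sharper bound $h\ge m/(n-1)$ by dropping the root from the sum, but $m/n$ is all that is needed in Section~\ref{sec:alg3}.)
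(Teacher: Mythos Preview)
Your proof is correct and follows essentially the same approach as the paper's: assign each edge to its lower (deeper) endpoint, observe that every edge then goes from a vertex to a proper ancestor, and bound the number of such edges per vertex by the number of ancestors, which is at most $h$, giving $m\le nh$. Your write-up is in fact a bit more careful (you explicitly note that tree edges and back edges are handled uniformly, that simplicity of $G$ is used, and that one can sharpen to $h\ge m/(n-1)$), but the underlying double-counting argument is identical.
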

\begin{proof}
We know that each non-tree edge in a DFS tree of an undirected graph is a {\em back edge}.
We shall associate each edge to its lower endpoint. Thus, in a DFS tree each vertex will 
be associated to a tree edge to its parent and back edges only to its ancestors. 
Now, each vertex can have only $h$ ancestors as the height of the DFS tree is $h$,
Hence each vertex has only $h$ edges associated to it resulting in less than
$nh$ edges, i.e. $m\leq nh$ or $h\geq m/n$.
Note that it is important for the graph to be connected otherwise from some root the 
corresponding component and hence its DFS tree can be much smaller.
\end{proof}

\newtheorem*{THMdfsP}{Lemma~\ref{lem:dfsProp}}
\begin{THMdfsP}
$T_C$ is a valid DFS tree of $H_C$.
\end{THMdfsP}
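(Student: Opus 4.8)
The plan is to invoke the characterization of DFS trees stated in the preliminaries: a rooted spanning tree of a graph is a DFS tree if and only if every non-tree edge is a back edge. Since $T_C$ is by construction a spanning tree of $C$, and $H_C$ is by $\mathcal{I}_H$ a spanning subgraph of $C$ containing $T_C$, it suffices to show that every non-tree edge of $H_C$ (with respect to $T_C$) is a back edge of $T_C$. So the whole proof reduces to a short case analysis on the endpoints of an arbitrary non-tree edge of $H_C$.

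First I would characterize the non-tree edges of $H_C$. By $\mathcal{I}_H$, every edge of $H_C$ is either a tree edge of $T_C$ or an edge of $S_C$ incident on at least one vertex of $T'_C$. Hence let $e=(x,y)$ be any non-tree edge of $H_C$; then $e\in S_C$ and at least one endpoint of $e$ lies in $T'_C$. I would then split into two cases. \emph{Case 1:} both $x$ and $y$ lie in $T'_C$. Then $e$ is a non-tree edge of $S_C$ with both endpoints in $T'_C$, so $\mathcal{I}_{T_1}$ immediately gives that $e$ is a back edge. \emph{Case 2:} exactly one endpoint, say $y$, lies outside $T'_C$. Then $y$ belongs to some tree $\tau$ of $T_C\setminus T'_C$, while $x\in T'_C$, and since $\tau$ is disjoint from $T'_C$ we have $x\in T_C\setminus\tau$. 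Thus $e$ is an edge of $S_C$ running from $\tau$ to $T_C\setminus\tau$, and $\mathcal{I}_{T_2}$ gives that $e$ is a back edge.

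Since these two cases exhaust all non-tree edges of $H_C$, every non-tree edge of $H_C$ is a back edge of $T_C$, and therefore $T_C$ is a valid DFS tree of $H_C$. I do not anticipate a genuine obstacle here: the lemma is essentially a bookkeeping consequence of the definitions of $H_C$ (invariant $\mathcal{I}_H$) and of the invariant $\mathcal{I}_T$ maintained on $T_C$. The only point that warrants a line of care is the observation in Case 2 that $T'_C\subseteq T_C\setminus\tau$, which is what lets us apply $\mathcal{I}_{T_2}$ to an edge with one endpoint in $T'_C$ rather than in another subtree of $T_C\setminus T'_C$.
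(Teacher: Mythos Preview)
Your proposal is correct and follows essentially the same approach as the paper's proof: reduce to showing every non-tree edge of $H_C$ is a back edge, then use $\mathcal{I}_H$ to conclude such edges are incident on $T'_C$, and handle the two cases via $\mathcal{I}_{T_1}$ and $\mathcal{I}_{T_2}$ respectively. Your write-up is in fact slightly more careful than the paper's in spelling out why $\mathcal{I}_{T_2}$ applies in Case~2.
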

\begin{proof}
In order to prove this claim it is sufficient to prove that all the non-tree edges stored in $H_C$ 
are back edges in $T_C$, i.e., the endpoints of every such edge share an ancestor-descendant relationship. 
Now, invariant ${\cal I}_{T_1}$ ensures that any edge in $S_C$ having both endpoints  
in $T'_C$ is a back edge.  And invariant ${\cal I}_{T_2}$ ensures that
any edge between a vertex in $T'_C$ and $T_C\setminus T'_C$ is a back edge. Hence, all the 
non-tree edges incident on $T'_C$ (and hence all non-tree edges in $H_C$) are back edges, proving our lemma.
\end{proof}

\newtheorem*{THMhcSPACE}{Lemma~\ref{lem:HCspace}}
\begin{THMhcSPACE}
The total number of edges in $H_C$, for all the components $C$ of $G'$, are $O(nk)$. 
\end{THMhcSPACE}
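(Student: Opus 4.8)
The plan is to bound the number of non-tree edges stored in $H_C$ across all components $C$ of $G'$, since the tree edges contribute only $O(n)$ in total (each $T_C$ is a spanning tree of its component, and the components are vertex-disjoint). By invariant ${\cal I}_H$, every non-tree edge in $H_C$ has at least one endpoint in $T'_C$, the top $k$ levels of $T_C$. So the first step is to charge each such edge to one of its endpoints lying in $T'_C$. I would like to say that each vertex is charged $O(1)$ times, but that is false in general — a single vertex of $T'_C$ can have many stored edges incident on it. The key observation that saves the bound is that, by Lemma~\ref{lem:dfsProp}, $T_C$ is a valid DFS tree of $H_C$, so every non-tree edge of $H_C$ is a \emph{back edge} of $T_C$: its two endpoints are in ancestor-descendant relation along a root-to-leaf path of $T_C$.

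Given that, I would charge each back edge of $H_C$ to its \emph{lower} endpoint. A back edge with lower endpoint $v$ and higher endpoint an ancestor $w$ of $v$ must have $w\in T'_C$ (either because both endpoints are in $T'_C$, or because one endpoint is in $T'_C$ and it must be the higher one, since $T'_C$ consists of the topmost levels and is ancestor-closed). Hence every such edge, viewed from $v$, goes to one of $v$'s ancestors that lies within the top $k$ levels — and there are at most $\min(k, \mathrm{level}(v)) \le k$ such ancestors. Therefore each vertex $v$ (in any component) is the lower endpoint of at most $k$ stored non-tree edges. Summing over all vertices of all components, which are disjoint subsets of $V$, gives at most $nk$ stored non-tree edges, and adding the $O(n)$ tree edges yields the claimed $O(nk)$ total.

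I expect the main subtlety to be the second step — arguing cleanly that for a back edge with an endpoint in $T'_C$, that endpoint is necessarily the \emph{higher} one, so that the charge lands on the lower endpoint and the "at most $k$ ancestors in $T'_C$" count applies uniformly. This uses that $T'_C$ is exactly the set of vertices at levels $0,\dots,k-1$ (equivalently $1,\dots,k$ depending on indexing) of $T_C$, hence downward-closed under the ancestor relation: if the lower endpoint were in $T'_C$ then the higher one would be too, and in any case every ancestor of the lower endpoint that sits in $T'_C$ is one of its topmost $k$ ancestors. Once this is pinned down, the counting is routine. A secondary point worth stating explicitly is why the $T_C$'s (and hence the vertex sets being summed over) are disjoint: they are spanning trees of the distinct connected components of $G'$, which partition $V\setminus T$.
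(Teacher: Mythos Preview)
Your proposal is correct and follows essentially the same argument as the paper: bound tree edges by $O(n)$ via disjointness of the components, invoke Lemma~\ref{lem:dfsProp} to conclude all stored non-tree edges are back edges, charge each to its lower endpoint, and observe that the lower endpoint has at most $k$ ancestors in $T'_C$. You are in fact slightly more careful than the paper in spelling out why the endpoint lying in $T'_C$ is necessarily the higher one (via ancestor-closedness of the top $k$ levels), a point the paper leaves implicit.
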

\begin{proof}
The size of $H_C$ can be analysed using invariant ${\cal I}_H$ as follows.
The number of tree edges in $T_C$ (and hence in $H_C$) is $O(|V_C|)$. 
The non-tree edges stored by $H_C$ have at least one endpoint in $T'_C$. 
Using Lemma~\ref{lem:dfsProp} we know that all these edges are back edges.
To bound the number of such edges let us associate each non-tree edge to its lower endpoint.
Hence each vertex will be associated to at most $k$ non-tree edges to its $k$ ancestors in $T'_C$
(recall that $T'_C$ is the top $k$ levels of $T_C$).
Thus, $H_C$ stores $O(|V_C|)$ tree edges and $O(|V_C|\cdot k)$ non-tree edges, i.e., 
total $O(|V_C|\cdot k)$ edges. Since $\sum_{C\in G'} |V_C|\leq n$, the total number of edges 
in $H_C$ is $O(nk)$.
\end{proof}

%
%
%
%
%
%

\section{Tightness of analysis of Procedure~\ref{alg:advAlg1}}
\label{apn:advAlg1WC}
We now describe a worst case example proving the tightness of the 
analysis of Procedure~\ref{alg:advAlg1}. 
To prove this we present a sequence of edges in the input stream, 
such that in each pass the algorithm extracts the longest path of the DFS tree,
which is computed using the first $O(nk)$ edges of the input stream that are internal 
to the component. We shall show that Procedure~\ref{alg:advAlg1} will require
$\Theta(n/k)$ passes to build the DFS tree for such a graph where the height of the tree is $o(n)$.
Note that the amortized time required by the algorithm in every pass is clearly optimal upto 
$O(\alpha(m,n))$ factors, hence the focus here is merely on the number of passes.

\begin{figure}[ht]
\centering
\includegraphics[width=\linewidth]{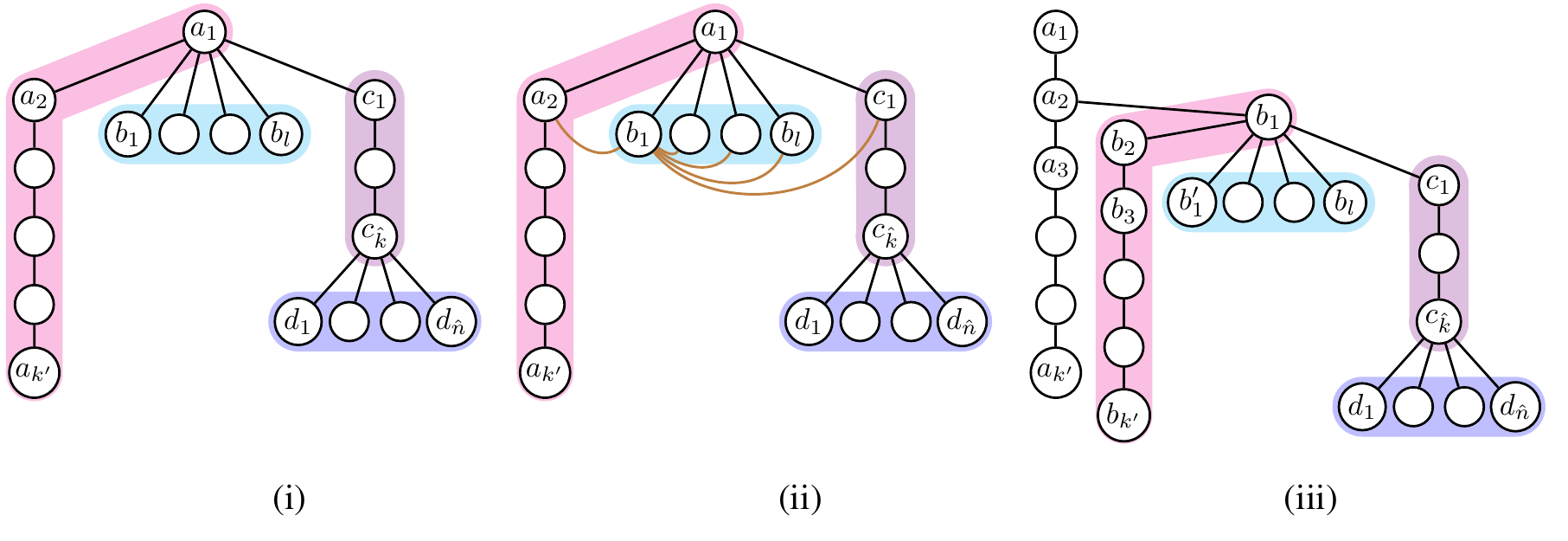}
\caption{Example to prove the tightness of analysis of Procedure~\ref{alg:advAlg1}.}
\label{fig:wcA1}
\end{figure}

Consider the tree shown in Figure~\ref{fig:wcA1} (i). At the beginning of every pass
of the algorithm, the unvisited graph will be connected and the vertices will be classified
into four sets $A,B,C$ and $D$, depending on a parameter $\hat{k}>k$ to be fixed later on. 
The set $A=\{a_1,\cdots,a_{k'}\}$ consists of $k'=\hat{k}+3$ vertices connected in the form 
of a path. The vertex $a_1$ is also the root of the DFS tree of the component.  
The set $C=\{c_1,...,c_{\hat{k}}\}$ is also connected in the form of a path. 
The set of vertices $D=\{d_1,...,d_{\hat{n}}\}$ are connected to all the vertices in $C$, 
where $\hat{n}=n/2$. The set $B=\{b_1,...,b_l\}$ contains the remaining vertices of the component 
each having an edge to $a_1$. The value of $\hat{k}$ is chosen so that the total number of edges 
described above becomes exactly the number of edges that can be stored by the procedure, i.e., 
$\Theta(nk)$. This graph of $\Theta(nk)$ edges will henceforth be called as the {\em partial graph} 
for the pass. The value of $k'=\hat{k}+3$ ensures that the longest path from root $a_1$ to a leaf, 
is the path connecting $a_1$ to the leaf $a_{k'}$. Clearly, the only DFS tree possible for the 
{\em partial} graph is shown in Figure~\ref{fig:wcA1}. Note that the subtree $T(c_1)$ may be 
computed differently with possibly one vertex of $D$ between every two vertices of $C$, 
but the corresponding analysis remains same.

Now, the $\Theta(nk)$ edges described above would appear first in the input edge sequence, and hence
would be stored by Procedure~\ref{alg:advAlg1} in the first pass, to compute the DFS tree shown 
in Figure~\ref{fig:wcA1}~(i). As a result the path connecting $a_1$ with $a_{k'}$ is added to $T$. 
The next edges in the stream connect $b_1$ to $a_2$, $c_1$ and all the vertices in $B\setminus \{b_1\}$, 
as shown in Figure~\ref{fig:wcA1}~(ii).
This makes sure 
that the unvisited graph remains connected after the end of this pass, and the lowest edge from this
component on $T$ is from $b_1$, making it the root of the DFS tree of the unvisited component. 
The corresponding spanning tree computed for the next pass is shown in Figure~\ref{fig:wcA1}~(iii). 
Notice its similarity with Figure~\ref{fig:wcA1}~(i), where $A'=\{b_1,...,b_{k'}\}$ and 
$B'=\{b'_1=b_{k'+1},...,b_l\}$. Hence, the above construction can be repeated for each pass adding a 
path of length $k'$ in every pass. Also, note that the number of edges in the {\em partial} graph 
would have now decreased as the size of $B$ has decreased (by $k'$ vertices). Hence we may have to 
add a vertex from $B$ to $C$ (by accordingly placing the corresponding edges next in the stream) to 
ensure $\Theta(nk)$ edges in the {\em partial} graph. But $C$ will grow much slower as compared to the number of passes, as each addition to $C$ adds $O(n)$ edges to the {\em partial} graph, 
whereas each pass reduces $O(k')$ edges from the {\em partial} graph. Hence the construction can continue for $\Theta(n/k)$ passes, where the size of $T$ grows by $O(1)$ in each pass resulting in a DFS tree
of height $O(n/k)$. Thus, we get the following lemma.

\begin{lemma}
There exists a sequence of edges for which Procedure~\ref{alg:advAlg1} takes $\Theta(n/k)$ passes to compute a DFS tree 
of height $o(n)$.
\end{lemma}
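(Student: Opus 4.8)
The plan is to exhibit an explicit infinite family of edge streams on which Procedure~\ref{alg:advAlg1}, which in each pass stores the first $\Theta(nk)$ edges internal to a component and extracts the longest root-to-leaf path of the DFS tree of that stored subgraph, makes $\Theta(n/k)$ passes even though the DFS tree it ends up building has height only $o(n)$. The point is that the algorithm's pass count is governed by the length of the path it peels off, not by the final tree height; so I want a gadget in which the stored subgraph always admits a *unique* DFS tree whose longest root-to-leaf path has length only $\Theta(k)$, while the bulk of the vertices sit in a part of the component that is never touched by that short path and hence survives to the next pass. Since each pass removes only $\Theta(k)$ vertices from a set of size $\Theta(n)$, this forces $\Theta(n/k)$ passes.

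First I would set up the gadget of Figure~\ref{fig:wcA1}: a root path $A=\{a_1,\dots,a_{k'}\}$ with $k'=\hat k+3$, a second path $C=\{c_1,\dots,c_{\hat k}\}$, a large antichain $D=\{d_1,\dots,d_{\hat n}\}$ with $\hat n=n/2$ each of whose vertices is joined to *every* vertex of $C$ (this is the dense blob that makes the $C$-$D$ subgraph's DFS subtree deep but uses up $\Theta(\hat k \hat n)$ edges), and a set $B=\{b_1,\dots,b_\ell\}$ of the remaining vertices each joined to $a_1$. I would then choose $\hat k = \Theta(k)$ so that the total edge count $|A|+|C|+|C|\cdot|D|+|B|$ of this \emph{partial graph} equals exactly the storage budget $\Theta(nk)$ of the procedure — this fixes $\hat k$ (it grows very slowly, like $k$ up to lower-order terms, since the $C$-$D$ product term dominates). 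Next I would verify the structural claim: because $C$ joins to the common neighbourhood $D$, any DFS tree of the partial graph must route $c_1,\dots,c_{\hat k},d_1,\dots,d_{\hat n}$ into one long ancestor–descendant chain hanging below $a_1$ (or $a_2$), and the $B$-vertices, all adjacent only to $a_1$, become shallow leaves; hence the root-to-leaf path from $a_1$ that the procedure extracts is forced to be the $A$-path of length $k'=\hat k+3=\Theta(k)$, not the long $C$–$D$ chain, because $a_1$'s child on the $C$-side is a single vertex and the extracted path is chosen as the longest from the root — and $A$ was rigged with $k'>\hat k$ to be that longest one. (The remark in the figure caption that $T(c_1)$ might interleave $D$ and $C$ vertices does not change the length bound.)

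Then I would describe the stream continuation that regenerates the gadget: after the $A$-path is peeled off, feed in edges joining $b_1$ to $a_2$, to $c_1$, and to all of $B\setminus\{b_1\}$, so that (i) the unvisited graph stays connected, (ii) the lowest edge from it onto the grown tree $T$ comes from $b_1$, making $b_1$ the new root, and (iii) the new spanning tree handed to the next pass is isomorphic to the original gadget with $A'=\{b_1,\dots,b_{k'}\}$ and $B'=B\setminus A'$ playing the roles of $A$ and $B$. I would then observe the bookkeeping: each pass shrinks $B$ by $k'$ vertices and therefore shrinks the partial graph by $\Theta(k')$ edges, so to keep the partial graph at exactly $\Theta(nk)$ edges one occasionally moves a single vertex from $B$ into $C$; but moving one vertex into $C$ adds $\Theta(n)$ edges (it connects to all of $D$), so such moves happen $\Theta(k)$ times over the whole process while $|B|=\Theta(n)$ shrinks at rate $k'$ per pass — hence the construction sustains for $\Theta(n/k)$ passes. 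In each of those passes $T$ grows by only the $A$-path, i.e. by $k'=\Theta(k)$ vertices along a single chain, so the final DFS tree has height $\Theta(n/k) \cdot \Theta(1)$ in the sense of the figure (each regenerated $A$-path is attached low, extending the spine by $O(1)$ effective depth after contraction), which is $o(n)$; plugging this in gives the claim.

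The main obstacle — and the step I would spend the most care on — is the structural uniqueness argument: proving that for the partial graph the procedure is *forced* to extract a path of length only $\Theta(k)$ rather than accidentally finding a longer one through the $C$–$D$ blob or through $B$. This needs a clean statement that in \emph{any} DFS tree of the partial graph the subtree rooted at $a_1$'s $C$-side child is a single ancestor–descendant chain (so its depth, though large, is reached only via that one child, and the longest root-to-leaf path is compared against the $A$-branch which is longer by construction), together with checking that no $B$-vertex can extend a path since all $B$-vertices are pendant at $a_1$. I would also need to double-check the monotone accounting that $|C|$ stays $o(n/k)\cdot$(stuff) so that the height really is $o(n)$ and not silently $\Theta(n)$; this is the kind of lower-order estimate that is easy to state loosely and must be pinned down.
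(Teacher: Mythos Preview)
Your proposal follows exactly the paper's construction --- the same four vertex sets $A,B,C,D$, the same parameters $k'=\hat k+3$ and $\hat n=n/2$, the same regeneration mechanism through $b_1$ (attached at $a_2$, then joined to $c_1$ and to all of $B\setminus\{b_1\}$), and the same bookkeeping of occasionally migrating a $B$-vertex into $C$ to keep the partial-graph edge count at $\Theta(nk)$. You also correctly flag the forced-short-path step as the one that needs care.

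That step, however, is where your reasoning goes astray. The $C$--$D$ block does \emph{not} become ``one long ancestor--descendant chain'' containing all of $D$: the $D$-vertices form an independent set each adjacent to every vertex of $C$, so in any DFS tree of the partial graph almost all $d_j$'s are leaves, and $T(c_1)$ has depth only $\Theta(\hat k)$, not $\Theta(\hat n)$. That shallowness is precisely why $k'=\hat k+3$ suffices to make the $A$-branch the longest root-to-leaf path; the clause ``because $a_1$'s child on the $C$-side is a single vertex'' is not the operative reason. (The paper notes that $T(c_1)$ may interleave a $D$-vertex between consecutive $C$-vertices; this still leaves the depth $\Theta(\hat k)$, so the extracted path is $\Theta(k)$ either way, and one controls which branch wins via the stream order of the $C$--$D$ edges.) Your height argument is similarly garbled: each regenerated path is attached \emph{high} in $T$ (its root $b_1$ hangs from $a_2$, at level $1$), not ``low'', so the height of $T$ grows by only $O(1)$ per pass; over $\Theta(n/k)$ passes this yields height $O(n/k)=o(n)$, which is the paper's conclusion.
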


\begin{remark}
Showing the tightness of the number of passes without any restriction on the height of the tree is very easy. Simply order $n\choose 2$
edges of the graph by the indices of its endpoint having a lower index. Every pass would consider all the edges of $O(k)$ vertices with the lowest index, 
which will result in the final DFS tree being a single path of length $n$, computed in $O(n/k)$ passes. However, 
such an example does not highlight the importance of Procedure~\ref{alg:advAlg2} which computes a DFS tree in $O(h/k)$ passes,
where $h$ is the height of the DFS tree. 
\end{remark}


\section{Restructuring procedure~\cite{BaswanaK17}}
\label{sec:rebuild}
We now briefly describe the restructuring procedure by Baswana and Khan~\cite{BaswanaK17}.
For the sake of simplicity, here we only describe how restructuring a DFS tree $T_C$ of $H_C$
is performed on insertion of a cross edge is achieved abiding {\em monotonic fall}, i.e., 
the DFS tree is restructured such that the level of each vertex only increases. 
Hence, we will not describe the various optimizations 
used to achieve the tight bound on total update time. 
The procedure essentially adds the given cross edge into the DFS tree, restructuring it accordingly.
In particular it reverses only a single path in the tree (see Figure~\ref{fig:rerooting}). 
However, this results in some back edges of the tree to become cross edges. 
These edges can be efficiently identified and removed from the graph and inserted back 
to the graph iteratively following the same procedure. 


\begin{figure}[ht]
\centering
\includegraphics[width=.85\linewidth]{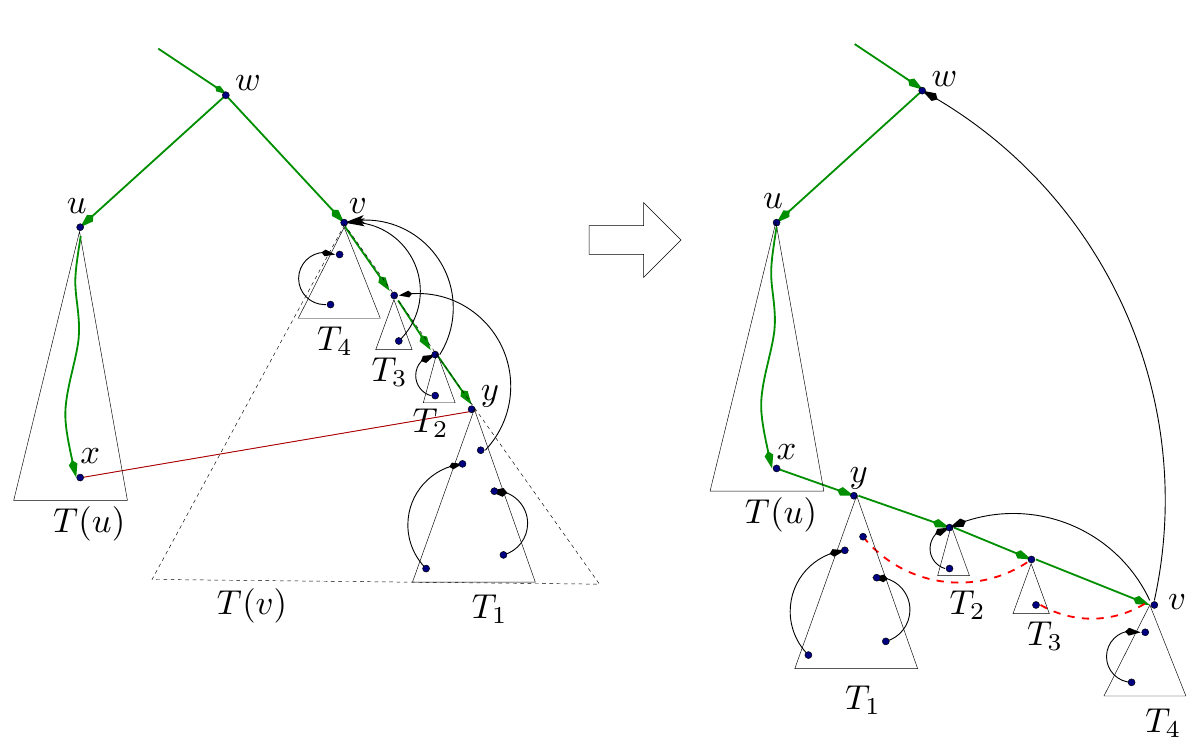}
\caption{Rerooting the tree $T_C(v)$ at $y$ and hanging it from $x$. Notice
that some back edges may become cross edges (shown dotted) due to this rerooting (reproduced from~\cite{BaswanaK17}).}
\label{fig:rerooting}
\end{figure}

Let us now describe the procedure in detail.
A pool of edges ${{\cal E}_C}$ is initialized with the inserted edge $e$. 
Then the edges in ${{\cal E}_C}$ are processed until it becomes empty. 
For each edge $(x,y)\in {{\cal E}_C}$ where $level(x) \geq level(y)$, 
it is  first removed from ${{\cal E}_C}$ and checked whether it is a cross edge.
This can be easily verified if both $x$ and $y$ are different from their
lowest common ancestor (LCA) $w$.
In case of a back edge, the edge is simply ignored  processing of ${\cal E}_C$ continues.
Otherwise, let $v$ be the ancestor of $y$ (not necessarily proper) that is a child of $w$.
Since $(x,y)$ is a cross edge, both $w$ and $v$ would surely exist.
The DFS tree $T_C$ is then restructured by removing the edge 
$(w,v)$ and adding the edge $(x,y)$. As a result the entire tree path from $y$ to $v$ 
would now hang from $y$ which was earlier  hanging from $v$,
reversing the parent child relation for all the edges on this path.
However, as a result of this restructuring several edges of $H_C$ may now become cross edges.
In order to maintain $T_C$ as the DFS tree of $H_C$, these cross edges are collected from $H_C$
and added to the pool of edges ${\cal E}_C$, which are then iteratively processed using the 
same procedure. 
Thus, the algorithm maintains the following invariant. 

\vspace{.5em}
{\centering
        \fbox{\parbox{\linewidth}{
\textbf{Invariant: } $T_C$ is a DFS tree of $H_C\setminus {\cal E}_C$.
}}}
\vspace{.5em}

\begin{procedure}
\BlankLine
${\cal E}_C\leftarrow \{e\}$\;
\While{${{\cal E}_C}\neq \emptyset$}
{
	
	$(x,y)\leftarrow$ Extract an edge from ${{\cal E}_C}$
	\tcc*{where $level(x)\geq level(y)$}
	$w\leftarrow $ LCA of $x$ and $y$ in $T_C$\;
	\BlankLine
	\If(\tcc*[f]{$(x,y)$ is a cross edge}){$w\neq x$ and $w\neq y$}
		{
		$v\leftarrow $ Ancestor of $y$ whose parent in $T_C$ is $w$\;
		Remove the edge $(w,v)$ from $T_C$\;
		Reverse the parent-child relationship of edges on path from $y$ to $v$ in $T_C$\;
		Add the edge $(x,y)$ to $T_C$\;
		
		$E_R\leftarrow$ The cross edges from $H_C\setminus {\cal E}_C$ in the tree $T_C$\;
		${{\cal E}_C}\leftarrow {{\cal E}_C}\cup E_R$\;
		}	
}
\caption{Maintain-DFS($T_C$,$e$): Maintains the DFS tree $T_C$ on insertion an edge 
$e$ abiding {\em monotonic fall} making $T_C$ a valid DFS tree of $H_C$.}
\label{alg:rebuild}
\end{procedure}

Hence, when the list ${\cal E}_C$ is empty, $T_C$ is a valid DFS tree of $H_C$. 
Refer to Procedure \ref{alg:rebuild} for the pseudocode of the restructuring procedure.
Now, observe that since $level(x)\geq level(y)$ the level of vertices can only increase as a result of the 
path reversal described above, ensuring {\em monotonic fall}. This also ensures the termination of the algorithm as the vertices cannot fall beyond the level $n$.
Moreover, the analysis~\cite{BaswanaK17} of the procedure ensures that the total work done to 
restructure $T_C$ (including maintenance of LCA structures etc.) can be associated to 
constant times the fall in level of vertices (similar to $\Phi$ described in Section~\ref{sec:alg4}). 
Since the each vertex can only fall by $h$ levels, the total fall of vertices is bounded by $O(nh)$,
where $h$ is the height of the final DFS tree. Further, recall that Procedure~\ref{alg:advAlg2} also
satisfied {\em monotonic fall}. Hence, the total time taken by Procedure~\ref{alg:rebuild} to 
restructure $T_C$ throughout the algorithm across all passes is $O(nh)$. However, we also need to 
account for the $O(1)$ time required to process an input edge whenever the procedure is invoked.
This requires total $O(n_e)$ time, where $n_e$ is the number of input edges processed by the procedure
in all the passes of the algorithm, which results in the following theorem.

\begin{theorem}
Given an undirected graph $G$, its DFS tree can be rebuild after insertion of cross edges by a procedure
abiding {\em monotonic fall} requiring total $O(nh+n_e)$ time across all invocations of the procedure, 
where $h$ is the height of the computed DFS tree and $n_e$ is the number of input edges processed. 
\end{theorem}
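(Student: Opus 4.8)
The plan is to establish three things in turn: that Procedure~\ref{alg:rebuild} halts with a valid DFS tree of $H_C$, that it abides \emph{monotonic fall}, and that the total work over all invocations is $O(nh+n_e)$. Correctness and monotonic fall are local structural facts, whereas the time bound rests on a global potential argument, so I would address them in that order.

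For correctness I would induct on the loop invariant stated just before the pseudocode, namely that $T_C$ is a DFS tree of $H_C\setminus\mathcal{E}_C$. Since the main algorithm first adds $e$ to $H_C$ and then invokes the procedure with $\mathcal{E}_C=\{e\}$, we have $H_C\setminus\mathcal{E}_C=H_C-e$, of which $T_C$ was a DFS tree before the insertion, so the invariant holds initially. For the inductive step I would examine the two outcomes of processing an extracted edge $(x,y)$: if it is a back edge it is discarded and thereby returned to $H_C\setminus\mathcal{E}_C$, which keeps $T_C$ a valid DFS tree since a back edge never violates the DFS property; if it is a cross edge, the reroot of $T_C(v)$ at $y$ hung from $x$ turns $(x,y)$ into a tree edge, and the \emph{only} edges this reroot can convert into cross edges are exactly those collected into $E_R$ and moved into $\mathcal{E}_C$, so every remaining non-tree edge of $H_C\setminus\mathcal{E}_C$ stays a back edge. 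When $\mathcal{E}_C$ empties, $H_C\setminus\mathcal{E}_C=H_C$, so $T_C$ is a DFS tree of $H_C$.

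Monotonic fall is the cleanest step. Because every extracted edge is oriented so that $level(x)\ge level(y)$, rerooting $T_C(v)$ at $y$ sets $level(y)$ to $level(x)+1>level(y)$, and reversing the path from $y$ up to $v$ shifts each vertex of the rerooted subtree downward. A short computation comparing old and new levels along the reversed path (each off-path subtree preserves its depth relative to its attachment point) shows the level change is minimized at $y$ itself and equals $level(x)+1-level(y)\ge 1$, so every affected vertex strictly increases its level. Hence no level ever decreases, which also forces termination since levels are capped at $n$.

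The time bound is the crux and the main obstacle. I would introduce the potential $\Phi=\sum_{v\in V}level(v)$, which by monotonic fall is non-decreasing across the \emph{entire} algorithm: it is never reset between passes, since attaching $T'_C$ to $T$ preserves levels in the global tree. The key claim is that the work of a single reroot --- the path reversal, the LCA computation and the maintenance of the ancestor structures, and, crucially, the collection of the newly created cross edges $E_R$ from $H_C$ --- is $O(1)$ per unit increase of $\Phi$. Path reversal is plainly bounded by the number of vertices that fall, and the genuinely delicate part is charging the cross-edge collection so that it is covered by level increases rather than by rescanning adjacency lists; this is precisely where I would invoke the detailed amortized analysis of Baswana and Khan~\cite{BaswanaK17}, which shows the total restructuring cost is a constant times the total fall $\sum_v \Delta level(v)$. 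Since each vertex's level is at most $h$, the total fall is $O(nh)$, bounding the restructuring work, and adding the $O(1)$ base cost incurred for each of the $n_e$ input edges the procedure processes contributes a further $O(n_e)$, giving the claimed $O(nh+n_e)$ total.
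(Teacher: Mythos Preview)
Your proposal is correct and follows essentially the same approach as the paper: the paper's Appendix~\ref{sec:rebuild} states the same loop invariant (``$T_C$ is a DFS tree of $H_C\setminus\mathcal{E}_C$''), gives the same one-line argument for monotonic fall from the orientation $level(x)\ge level(y)$, and obtains the time bound by the identical potential argument (total fall of levels bounded by $O(nh)$, citing~\cite{BaswanaK17} for the charging of restructuring work to level drops) plus the additive $O(n_e)$ for per-edge processing. Your treatment is simply more explicit on the invariant induction and the per-vertex level computation during reroot, but the underlying ideas coincide.
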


\begin{remark}
Baswana and Khan~\cite{BaswanaK17} showed that any algorithm maintaining a DFS tree
incrementally abiding {\em monotonic fall}, necessarily requires $\Omega(n^2)$ time 
to maintain the tree explicitly even for sparse graphs. However, if we present the bound
in terms of the height $h$ of the DFS tree, the corresponding bound reduces to $\Omega(m+nh)$
as every algorithm requires $\Omega(1)$ time to process each of the $m$ input edges. 
In the streaming environment, where multiple passes over input stream are performed, 
this bound naturally extends to $\Omega(n_e+nh)$, where $n_e$ is the number of edges processed 
during all the passes over the input stream. 
\end{remark}

%
%
 

\section{Experimental Evaluation}
\label{apn:expEval}
We now perform an experimental evaluation of the algorithms to understand their significance in practice. The main criterion of evaluation is the number of passes required to completely build the DFS tree, instead of the time taken. This makes the evaluation independent of the computing platform, programming environment, and code efficiency, resulting in easier reproduction and verification of this study. For random graphs, the results of each experiment are averaged over several test cases to get the expected behaviour.

A related experimental study was performed by Baswana et al.~\cite{BaswanaGK18} which analysed different incremental DFS algorithms on random and real graphs. For random graphs, they also presented simple single pass algorithms to build a DFS tree using $O(n\log n)$ space. 
Moreover, they also presented the following property which shall be used to describe some results during the course of our evaluation.

\begin{theorem}[DFS Height Property\cite{BaswanaGK18}]
The depth of a DFS tree of a random graph $G(n,m)$, with $m=cn\log n$ is at least $n-n/c$ with high probability.
\end{theorem}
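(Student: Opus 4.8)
The plan is to analyze the DFS process directly via the principle of deferred decisions, and to extract the depth bound by tracking only the \emph{initial path}---the sequence of vertices placed on the DFS stack before the very first backtrack. Since this initial path is an ancestor--descendant path rooted at the start vertex, its length is an immediate lower bound on the depth of the DFS tree, so it suffices to show that this path alone already reaches length at least $n-n/c$ with high probability. I would work in $G(n,p)$ with $p=2c\log n/n$, which matches $G(n,m)$ for $m=cn\log n$ since $\binom{n}{2}p\approx cn\log n$; the transfer back to $G(n,m)$ is handled at the end.

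First I would set up the exploration so that edges are revealed lazily. Run DFS from an arbitrary root $v_0$; whenever a vertex $v$ sits on top of the stack, probe the pairs $(v,u)$ with $u$ currently unvisited and reveal each such pair only at this moment. The crucial bookkeeping step is to verify that, \emph{up to the first backtrack}, every query is fresh: before any pop the stack only grows, so each $v_i$ is examined exactly once, and a pair $(v_i,u)$ with $u$ unvisited cannot have been revealed earlier (neither endpoint has previously been on top while the other was unvisited). One checks that the only shared slot $(v_i,v_j)$ with $i<j$ lies in the query set of $v_i$ but not of $v_j$, so the query sets are disjoint. Consequently the events $A_i$ that $v_i$ finds an unvisited neighbour are functions of disjoint blocks of independent Bernoulli$(p)$ slots, hence mutually independent, and when $v_i$ is examined there are exactly $n-i-1$ unvisited vertices, giving $\Pr[A_i]=1-(1-p)^{\,n-i-1}$.

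Next I would bound the probability that the initial path is short. Let $L$ be the level of the first vertex that backtracks, so $\{L\ge t\}=\bigcap_{i=0}^{t-1}A_i$. Taking complements and a union bound,
\[
\Pr[L< t]\ \le\ \sum_{i=0}^{t-1}(1-p)^{\,n-i-1}.
\]
Setting $t=n-n/c$, every exponent satisfies $n-i-1\ge n/c$, so each summand is at most $(1-p)^{n/c}\le e^{-pn/c}=e^{-2\log n}=n^{-2}$, using $pn/c=2\log n$. Hence $\Pr[L<n-n/c]\le n\cdot n^{-2}=1/n=o(1)$, and since the depth of the DFS tree is at least $L$, the tree has depth at least $n-n/c$ with probability $1-o(1)$.

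The main obstacle is justifying the deferred-decisions independence and the passage from $G(n,p)$ to $G(n,m)$. For the former, the whole argument hinges on the freshness claim above, which is clean precisely because I restrict attention to the phase before the first pop; once backtracking begins a vertex can be re-examined and its earlier absent queries must be remembered, which would break the independence and the simple product form. For the latter, I would rerun the identical exploration in $G(n,m)$, where a fresh pair is an edge with conditional probability (remaining edges)/(remaining unqueried pairs); since at most $O(n)=o(m)$ edges are exposed before reaching depth $n-n/c$, this probability stays $(1-o(1))\,p$ throughout, and the same union bound yields $\Pr[L<n-n/c]\le n^{-1+o(1)}=o(1)$. Alternatively, one may invoke the standard asymptotic equivalence between $G(n,m)$ and $G(n,p)$ with $p=m/\binom{n}{2}$, taking care that the analysis above never relies on monotonicity of the (non-monotone) depth parameter.
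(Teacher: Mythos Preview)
The paper does not actually prove this statement: it is quoted as a result from \cite{BaswanaGK18} and used only as background in the experimental section, so there is no in-paper argument to compare against.

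For what it is worth, your approach is the standard and correct one for this kind of bound: track the root-to-first-backtrack path, reveal edge slots lazily, observe that before any pop the slots probed at step $i$ are disjoint from all earlier ones, and union-bound the failure events. One wording issue is worth tightening: the $A_i$ are not literally ``mutually independent,'' because the block of slots defining $A_i$ depends on the identity of $v_i$, which in turn depends on the earlier outcomes; what your disjointness argument actually delivers---and all you use---is the conditional bound $\Pr[\neg A_i\mid A_0\cap\cdots\cap A_{i-1}]=(1-p)^{n-i-1}$, from which $\Pr[L<t]=\sum_{i<t}\Pr[L=i]\le\sum_{i<t}(1-p)^{n-i-1}$ follows directly. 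Your $G(n,m)$ transfer via the hypergeometric lower bound $(m-e)/\bigl(\binom{n}{2}-q\bigr)\ge(1-o(1))p$, using $e\le n=o(m)$, is also sound and is the right way to avoid relying on monotonicity of the (non-monotone) depth parameter.
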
  

We shall also be using the following properties regarding the thresholds of the phase transition phenomenon of Random Graphs to describe the performance of algorithms.
\begin{theorem}[{Connectivity Threshold}\cite{FriezeK15}] 
	Graph $G(n,m)$ with $m=\frac{n}{2}(\log n + c)$  is connected with probability at 
	least $1-e^{-c}$, for any constant $c>0$.
\label{theorem:Frieze1}
\end{theorem}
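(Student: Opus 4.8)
The plan is to bound the probability that $G(n,m)$ with $m=\tfrac n2(\log n+c)$ is \emph{disconnected} by $e^{-c}+o(1)$, which immediately gives the claimed lower bound on the probability of connectivity. The argument rests on the elementary dichotomy: if $G$ is disconnected, then either it has an isolated vertex, or it has a connected component whose size $k$ lies in the range $2\le k\le n/2$ (the smaller side of the vertex partition induced by the components has size at most $n/2$). I would bound these two events separately by the first moment method, working in $G(n,m)$ directly, or equivalently in $G(n,p)$ with $p=m/\binom n2$ since connectivity is a monotone property.

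\textbf{Isolated vertices.} Writing $N=\binom n2$, the probability that a fixed vertex $v$ is isolated in $G(n,m)$ is $\binom{N-(n-1)}{m}/\binom Nm=\prod_{i=0}^{n-2}\frac{N-m-i}{N-i}$; since the factors are decreasing in $i$, this is at most $(1-m/N)^{\,n-1}\le e^{-m(n-1)/N}=e^{-(\log n+c)}=e^{-c}/n$, using that $m(n-1)/N=\log n+c$ exactly. Summing over the $n$ vertices, $\Pr[\exists\text{ isolated vertex}]\le e^{-c}$. This is the term that dominates the whole bound.

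\textbf{Small components.} I would union-bound over all vertex sets $S$ with $|S|=k$, $2\le k\le n/2$. Since a component is connected and sends no edges to the rest — two constraints that involve disjoint sets of potential edges — one gets $\Pr[S\text{ is a component}]\le k^{k-2}p^{k-1}(1-p)^{k(n-k)}$, where the factor $k^{k-2}$ is Cayley's count of labelled spanning trees on $S$. Plugging in $np=\log n+c$ and $\binom nk k^{k-2}\le e^kn^k/k^2$, the $k$-th term is at most $\tfrac1{pk^2}\big(nep\,e^{-p(n-k)}\big)^k$, and I would show $\sum_{k=2}^{n/2}$ of these is $o(1)$: the $k=2$ term, using $(1-p)^{2(n-2)}=(1+o(1))/n^2$, is $O(\log n/n)$, while the terms for $k\ge3$, bounded crudely via $nep\,e^{-p(n-k)}=O(\log n/\sqrt n)$, sum to $O((\log n)^2/\sqrt n)$. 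Hence $\Pr[\exists\text{ component of size in }[2,n/2]]=o(1)$, and combining with the previous step, $\Pr[G(n,m)\text{ disconnected}]\le e^{-c}+o(1)$, as required.

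I expect the main obstacle to be exactly this last sum. A single uniform estimate — replacing $nep\,e^{-p(n-k)}$ throughout the range by its maximum, attained near $k=n/2$ — is far too lossy and produces a divergent bound; one must peel off $k=2$ using the sharper $1/n^2$ estimate before applying crude bounds to $k\ge3$ (the large-$k$ tail being harmless since there the per-term bound is super-exponentially small). Everything else — the first-moment computation for isolated vertices, Cayley's formula, and the independence of internal versus crossing edges — is routine.
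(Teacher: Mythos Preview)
The paper does not prove this statement at all: it is quoted verbatim from~\cite{FriezeK15} as a background fact for the experimental section, with no accompanying argument. So there is no ``paper's own proof'' to compare against.

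That said, your sketch is the standard first-moment argument one finds in textbook treatments (including, essentially, the source being cited): bound isolated vertices directly, then kill components of size $2\le k\le n/2$ via Cayley's formula and a union bound, splitting the sum at $k=2$ to avoid the lossy uniform estimate. The arithmetic for the isolated-vertex term is clean (your observation that $m(n-1)/N=\log n+c$ exactly is the right way to get the $e^{-c}$ without slack), and your diagnosis of where the small-component sum needs care is accurate. One minor point: you invoke ``independence of internal versus crossing edges,'' which is literally true in $G(n,p)$ but not in $G(n,m)$; since you already noted that connectivity is monotone and the two models are asymptotically equivalent at this edge density, this is fine, but in a full write-up you should either work entirely in $G(n,p)$ and transfer at the end, or handle the $G(n,m)$ conditioning explicitly.
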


\begin{theorem}[Giant Component Threshold\cite{FriezeK15}] 
	Graph $G(n,m)$ with $m=\frac{cn}{2}$ for any constant $c>0$,  the graph contains a single giant component having $O(n)$  vertices and $O(m)$ edges while the residual components have at most the size $O(\log n)$.
\label{theorem:Frieze2}
\end{theorem}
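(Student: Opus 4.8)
The plan is to prove this via the classical component-exploration (branching-process) technique for the Erd\H{o}s--R\'enyi model, working in the supercritical regime $c>1$ where the giant actually emerges (the phase transition occurs precisely at $c=1$, so the clean statement above is understood for $c>1$; for $c<1$ all components are already $O(\log n)$). The edge-count model $G(n,m)$ with $m=cn/2$ and the binomial model $G(n,p)$ with $p=c/n$ are asymptotically equivalent for monotone component properties, so I would first transfer the statement to $G(n,p)$ by sandwiching $G(n,m)$ between $G(n,p_-)$ and $G(n,p_+)$ with $p_\pm=c(1\pm o(1))/n$, a standard conditioning argument. The mean degree is then $c$.

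The core object is the breadth-first exploration of the component $\mathcal{C}(v)$ of a fixed vertex $v$: one maintains sets of \emph{active}, \emph{explored}, and \emph{untouched} vertices and, at each step, reveals the neighbours of one active vertex. When $t$ vertices have already been discovered, the number of freshly discovered neighbours is $\mathrm{Bin}(n-t,\,p)$, which is dominated above by $\mathrm{Bin}(n,p)\approx\mathrm{Poisson}(c)$ and, as long as $t=o(n)$, bounded below by $\mathrm{Bin}(n(1-o(1)),p)$. Thus the exploration is sandwiched between two Galton--Watson processes whose offspring means are $c(1\pm o(1))$, and the analysis reduces to the total-progeny behaviour of these processes.

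First I would establish the upper bound on the \emph{non-giant} components. Coupling the exploration with the dominating $\mathrm{Poisson}(c)$ branching process and invoking the exponential tail of the total progeny of a process conditioned to die out gives $\Pr[\,|\mathcal{C}(v)|\ge s,\ \mathcal{C}(v)\text{ finite-type}\,]\le e^{-\beta s}$ for a constant $\beta=\beta(c)>0$. A union bound over the $n$ starting vertices, with the threshold chosen as $s=K\log n$ so that $n\,e^{-\beta K\log n}\to 0$, shows that with high probability every component other than the giant has size $O(\log n)$. Next, for existence and order of the giant, I would use the survival probability $\rho$, the unique positive root of $\rho=1-e^{-c\rho}$: the lower branching bound shows each vertex lies in a component of size $\ge K\log n$ with probability $\rho+o(1)$, and a first-moment/second-moment computation on the number of vertices in large components concentrates this count around $\rho n=\Theta(n)$, which forces $\Theta(m)$ edges inside it. Uniqueness then follows by a \emph{sprinkling} argument: reveal most edges first, then use the remaining $\Theta(n)$ independent edges to show any two surviving components of size $\ge K\log n$ are joined except with probability $e^{-\Omega(\log n)}$, so with high probability exactly one giant exists.

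The hard part will be the concentration of the giant's order together with the sprinkling merge: one must carefully handle the $t=o(n)$ restriction under which the lower branching coupling is valid (the offspring law degrades once a constant fraction of vertices is consumed), and control the second moment of the large-component vertex count so that the $\Theta(n)$ estimate is tight rather than merely a lower bound. Making the two-round edge exposure in the sprinkling step independent of the earlier exploration, so that the union bound over pairs of large components goes through, is the most delicate bookkeeping; everything else reduces to standard branching-process tail estimates.
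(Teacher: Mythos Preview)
The paper does not give its own proof of this theorem: it is quoted verbatim as a background fact from the random-graphs literature (the citation \cite{FriezeK15} is Frieze and Karo\'nski's textbook), and is used only to interpret the experimental plots in Appendix~\ref{apn:expEval}. So there is nothing in the paper to compare your proposal against.

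That said, your sketch is essentially the standard textbook argument one finds in the cited reference: the $G(n,m)\leftrightarrow G(n,p)$ transfer, the BFS exploration sandwiched between two Galton--Watson processes with mean $c(1\pm o(1))$, the exponential tail on total progeny giving the $O(\log n)$ bound on small components, the survival-probability computation $\rho=1-e^{-c\rho}$ for the order of the giant, and sprinkling for uniqueness. You are also right to flag that the clean dichotomy only holds in the supercritical regime $c>1$; the paper's phrasing ``for any constant $c>0$'' is loose (for $c<1$ there is no giant and all components are $O(\log n)$, for $c=1$ the largest component is $\Theta(n^{2/3})$). Since the theorem is imported rather than proved, this imprecision is the cited source's to answer for, not the present paper's.
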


\subsection{Datasets}
In our experiments we considered the following two types of datasets.
\begin{itemize}
\item \textbf{Random Graphs:} The initial graph is the star graph, formed by adding an edge from a dummy vertex $r$ to each vertex (recall Section~\ref{sec:prelim}). The update sequence is generated based on Erd\H{o}s R\'{e}nyi $G(n, m)$ model~\cite{Bollobas84,ErdosR60} by choosing the first $m$ edges of a random permutation of all the edges in the graph. 

\item \textbf{Real graphs:} We use several publicly available undirected graphs from real world. We derived these graphs from the KONECT dataset~\cite{Konect13}. These datasets are of different types, namely, online social networks (HM~\cite{PHamster16}, Apgp~\cite{MRAA04}, BrightK~\cite{ChoML11}, LMocha~\cite{ZafaraniL09}, FlickrE~\cite{McAuleyJ12}, Gowalla~\cite{ChoML11}), human networks (AJazz~\cite{PabloL03}, ArxAP~\cite{LeskovecKF07}, Dblp~\cite{LeskovecY12}), recommendation networks (Douban~\cite{ZafaraniL09}, Amazon~\cite{LeskovecY12}), infrastructure (CU~\cite{Knuth08}, CH~\cite{Chicago1,Chicago2}),  autonomous systems (AsCaida~\cite{LeskovecKF07}), lexical words (Wordnet~\cite{Fellbaum98}) and protein base (Mpdz~\cite{ThijsLMPH05}).
\end{itemize}

\subsection{Algorithms}
During the course of these experiments, we modified the algorithms described previously using some obvious heuristics to improve their empirical performance. The analyzed algorithms are as follows.
\begin{itemize}
\item \textbf{Simple Algorithm (Simp):} This algorithm refers to Procedure~\ref{alg:simple1}  described in Section~\ref{sec:alg1}, which adds one new vertex to the DFS tree during each pass, requiring exactly $n$ passes irrespective of the data set. However, notice that after having found a new vertex $u$, the residual pass is wasted. This can be used to possibly find the next neighbour $v$ of $u$ if an edge  $(u,v)$ exists in the residual pass, and then possibly the neighbour $w$ of $v$, and so on. Hence, using this {\em additional heuristic} the algorithm now possibly adds more vertices in each pass, requiring less than $n$ passes. The algorithm without using the additional heuristic shall be referred to as \textbf{SimpO}.

\item \textbf{Improved Algorithm (Imprv):} This algorithm refers to Procedure~\ref{alg:simple} described in Section~\ref{sec:alg2}, where in the $i^{th}$ pass all the vertices in the $i^{th}$ level of the final DFS tree are added. Thus, this algorithm requires exactly $h$ passes, where $h$ is the height of the computed DFS tree. 

\item \textbf{K Path Algorithm (kPath):} This algorithm refers to Procedure~\ref{alg:advAlg1} described in Section~\ref{sec:alg3}, where by using $nk$ edges each pass adds a path of length at least $k$ to the DFS tree, for each component of the unvisited graph. For a component of $n'$ vertices, the algorithm essentially computes an {\em auxiliary} DFS tree using the first $n'k$ edges of the pass and adds the longest path of this tree to the final DFS tree. 


\item \textbf{K Level Algorithm (kLev):} This algorithm refers to Procedure~\ref{alg:advAlg2} described in Section~\ref{sec:alg4}, where in each pass by using $nk$ edges a spanning tree is computed whose top $k$ levels are the next $k$ levels of the final DFS tree. This requires {\em exactly} $h/k$ passes, where $h$ is the height of the final DFS tree. However, it is evident from the algorithm that if some vertex and all its ancestors are not modified during a pass, it will remain so in the final DFS tree. Hence, we use an {\em additional heuristic} which also adds such {\em unmodified vertices} at the end of the pass to the DFS tree. The algorithm without using the additional heuristic shall be referred to as \textbf{kLevO}.
\end{itemize}

\subsection{Experiments on Random Graphs}
\newcommand{\figW}{\linewidth}
\begin{figure*}[!ht]
\centering
\includegraphics[width=\figW]{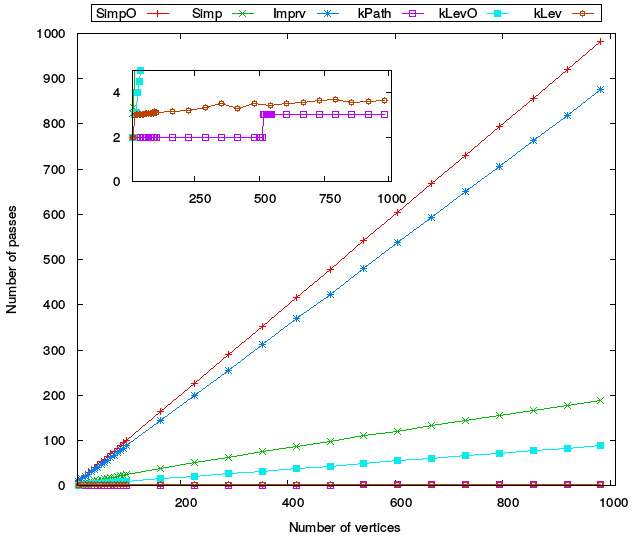}
\caption{
For various algorithms, the plot shows the number of passes required to build a DFS tree of a random graph with  $m={n \log{n}}$ edges for different values of $n$. The advanced algorithms are allowed to store $10n$ edges.  
}
\label{fig:varN}
\end{figure*}

We first compare in Figure~\ref{fig:varN}, the expected number of passes taken by different algorithms for random graphs having $n\log{n}$ edges, for different values of $n$ up to $1000$. The number of passes required by SimpO clearly matches the number of vertices as expected. The performance of Simp is strikingly better than SimpO demonstrating the significance of the {\em additional heuristic}. The variation of Imprv essentially shows the expected height of the DFS tree for $n\log n$ edges. The advanced algorithms are evaluated using $nk$ edges, for $k=10$. The algorithm kPath performs extremely well, showing the presence of deep DFS tree of a random graph even with $10n$ edges (as expected from DFS height property), and thereafter splitting into small components. It requires the minimum number of passes (recall that kPath and kLev uses an additional pass to determine the components) for the values of $n$ having $k>\log n/2$, after which it still requires merely $3$ passes. Notice that this is against the expectation because when $nk\geq m$, or $k\geq \log n$, the algorithm should require minimum passes. The number of passes taken by kLevO for a given value of $n$ is indeed close to $1/k$ times the number of passes taken by Imprv, as expected by the theoretical bounds. However, the performance of kLev is remarkably better as compared to kLevO demonstrating the significance of the {\em additional heuristic}. Apparently, the whole of DFS tree is fixed within a few passes, after which kLevO merely adds the top $k$ levels to the final DFS tree in each pass. Thus, the role of the {\em additional heuristic} is very significant, which is adversely affected as $n$ becomes larger with respect to $k$. Henceforth, we shall evaluate only Simp and kLev ignoring SimpO and kLevO as they do not seem to reveal any extra information. Following are the most surprising observations of this experiment:

\begin{observation} The advanced algorithms perform extremely well for $n$ from $1$ to $1000$,\\
(a) kPath requires the $2$ passes (minimum) until $k\leq \log n/2$ and $3$ passes henceforth.\\
(b) kLev requires merely $3$ passes which gradually increase to $4$. 
\label{obs:varN}
\end{observation}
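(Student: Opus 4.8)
The plan is to replace the experimental curves by a high-probability analysis of the two algorithms on $G(n,n\log n)$ with storage budget $nk$, treating parts (a) and (b) separately and invoking the three random-graph facts available to us: the \emph{DFS Height Property}, the Connectivity Threshold (Theorem~\ref{theorem:Frieze1}), and the Giant Component Threshold (Theorem~\ref{theorem:Frieze2}). Throughout, every claim is to be read as holding with high probability over the random edge stream, and the initial \emph{components pass} is accounted for separately: it always succeeds in one pass, since $m=n\log n$ comfortably exceeds the connectivity threshold $\tfrac{n}{2}\log n$, so $G$ is connected and Union--Find returns a single spanning tree.

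For part (a), kPath, the second pass stores the first $nk$ intra-component edges, which are distributed essentially as $G(n,nk)$, and builds an auxiliary DFS tree on this subgraph together with the spanning tree $T_C$; kPath then peels off its longest root-to-leaf path $P$. The core of the argument is to show that $P$ is \emph{almost spanning}: writing $nk=c\,n\log n$ with $c=k/\log n$, the DFS Height Property forces the auxiliary tree to have depth at least $n-n/c$, so all but $O(n/c)=O(n\log n/k)$ vertices already lie on $P$. I would then argue that the residual $C\setminus P$ is disposed of in a single further pass: the vertices off $P$ induce a much sparser random graph whose components are $O(\log n)$-sized by the Giant Component Threshold, and one more pass attaches each such component via its lowest edge onto $P$. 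This yields the three-pass regime. The \emph{two-pass} (minimum) regime and the transition at $k\approx \log n/2$ I would pin to the Connectivity Threshold for the \emph{stored} subgraph: precisely when $nk>\tfrac{n}{2}\log n$, i.e.\ $k>\log n/2$, the subgraph $G(n,nk)$ is already connected, so its DFS tree spans all of $C$ and the extracted path leaves no residual component, finishing in the second pass; once $k\le \log n/2$ the stored subgraph fragments and a genuine third pass is forced.

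For part (b), kLev, the pass count $\lceil h/k\rceil$ alone cannot explain the observed constant, since $h$ is close to $n$ for $G(n,n\log n)$; the constant comes entirely from the \emph{additional heuristic}, so the analysis must quantify how fast the DFS tree \emph{stabilises}. I would call a vertex \emph{fixed} once it and all of its ancestors are left unmodified during a pass, so that the heuristic finalises it, and then show the fixed set grows to all of $V$ within a constant number of passes. The plan is to argue that after the first pass the spanning tree computed on $T_C\cup H_C$ already agrees with a valid DFS tree of $G$ on all but a small, sparse residual, so that the set of vertices still able to change level shrinks by a constant factor each pass; invoking the Giant Component Threshold on the residual as in part (a) then caps the number of passes needed to fix every ancestor-chain at $O(1)$, explaining the three-to-four passes, with the mild growth attributable to the $O(n\log n/k)$ residual enlarging as $n$ grows relative to the fixed $k$.

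The hardest step, in both parts, is controlling the \emph{residual after peeling}. The DFS Height Property and the threshold theorems describe a fresh $G(n,m)$, whereas the vertices left after removing an almost-spanning path $P$ (or after one stabilising pass of kLev) form a \emph{correlated} subset, not an independent random graph, so the Giant Component Threshold cannot be applied verbatim; making the ``residual is sparse and shatters into $O(\log n)$ pieces'' claim rigorous is the crux. A secondary obstacle is that the stated DFS Height Property gives a useful depth bound only for $c>1$ (that is, $k>\log n$), whereas the observed two-pass behaviour persists down to $k>\log n/2$; bridging the gap between the connectivity threshold and the depth threshold --- showing that a \emph{connected} stored subgraph already has a near-spanning DFS path even when $\tfrac12\log n<k<\log n$ --- is exactly where a sharper, self-contained depth estimate for dense random graphs would be required. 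For kLev the additional difficulty is that the restructuring dynamics driving stabilisation are not a clean independent process, so bounding the stabilisation time by $O(1)$ passes is more of a concentration argument than a direct application of the quoted theorems.
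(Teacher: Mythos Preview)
This statement is not a theorem but an \emph{empirical observation}: the paper simply reports it as the summary of the experiment plotted in Figure~\ref{fig:varN} and gives no proof. The only analysis offered is the short heuristic paragraph under ``Inferences from Observations,'' which loosely ties the $\log n/2$ transition for kPath to the connectivity threshold and then states explicitly that ``the remaining features of this exceptional performance of kPath and kLev would require further examination of the properties of random graphs, which is beyond the scope of this paper.'' So there is no proof in the paper for your proposal to be compared against; you are attempting something the authors deliberately leave open.

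Your attempted high-probability argument also contains a concrete flaw in the two-pass case for kPath. You claim that once $nk>\tfrac{n}{2}\log n$ the stored prefix $G(n,nk)$ is connected and hence ``the extracted path leaves no residual component, finishing in the second pass.'' But the branch in Procedure~\ref{alg:advAlg1} is governed by whether \emph{all} edges of the component fit in memory ($|E_C|\le |V_C|\cdot k$), not by connectivity of the stored prefix: when $k<\log n$ the full edge set does not fit, so the procedure adds only the longest root-to-leaf path of the auxiliary DFS tree, and for that to terminate the recursion the path would have to be Hamiltonian, which mere connectivity of the stored subgraph does not give (and, as you note yourself, the DFS Height Property is vacuous for $c=k/\log n\le 1$). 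Thus the mechanism you propose for the $\log n/2$ threshold does not match the algorithm's control flow; incidentally, the paper's own informal reading of that threshold runs in the opposite direction (a dense giant residual forcing an \emph{extra} pass despite $nk>m$), and is likewise not made rigorous. Your part~(b) sketch is candid about its gaps, and the paper does no better there, attributing the constant pass count entirely to the additional heuristic without analysis.
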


\begin{figure*}[!ht]
\centering
\includegraphics[width=\figW]{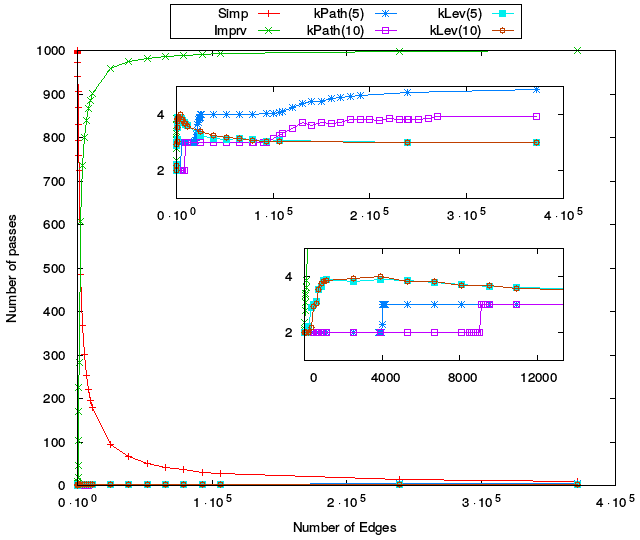}
\caption{
For various algorithms, the plot shows the number of passes required to build a DFS tree of a random graph with for $n=1000$ for different values of $m$ up to ${n \choose 2}$. The advanced algorithms are allowed to store (1) $5n$ edges, and (2) $10n$ edges.  
}
\label{fig:varM}
\end{figure*} 
 
We now compare in Figure~\ref{fig:varM}, the expected number of passes taken by different algorithms for random graphs having $1000$ vertices, for different values of $m$ up to $n \choose 2$. The number of passes required by Simp decreases sharply as the graph becomes denser. This can be explained by the {\em additional heuristic} of Simp, which has more opportunities to add vertices during a single pass in denser graphs. However, the performance of Imprv worsens sharply as the graph becomes denser. This is because the height of a DFS tree and hence the number of passes increases sharply with the density by the DFS height property. The advanced algorithms are evaluated using $nk$ edges, for $k=5$ and $10$. Notably, the performance of kPath worsens with the increase in density despite the fact that it exploits the depth of the {\em auxiliary} DFS tree to extract the longest path. However, recall that this auxiliary DFS tree is made using just the first $n'k$ edges for a component of size $n'$, which  is clearly independent of the density of the graph. Moreover, the resulting components formed after having removed the longest path would be less in number and larger in size as the density of the graph increases, justifying more number of passes required by kPath. However, notice that $k=10$ performs better than $k=5$, as $k$ clearly affects the depth of the {\em auxiliary} DFS tree and hence the length of the path added during a pass. Also, note that the passes required by kPath are minimum till little earlier than $m=nk$, as noticed in Observation~\ref{obs:varN} (a). After this, instead of increasing gradually (as its an expected value) with density as normally expected, the number of passes increases like a {\em staircase} having steps at $3$, $4$ and so on. However, after around 100,000 edges, the number of passes increases gradually for both values of $k$ to its final value, {\em not} adhering to the {\em staircase} structure. Finally, kLev also has a surprising performance sharply rising up to $4$ passes for around $1000$ edges after which it decreases gradually to settle at $3$ passes. This is despite the fact that the depth of the DFS tree is sharply increasing (see Imprv), implying that the performance is again dominated by the {\em additional heuristic}. Moreover, it also seems to be unaffected by the different values of $k$. Following are the key observations of this experiment:

\begin{observation} Performance of the advanced algorithms for varying density is as follows\\
(a) Passes required by kPath increase in steps abruptly from the $2$ (minimum) to $3$ earlier than $m=nk$, and further such transitions occur at threshold densities which increase with $k$. However, after around 100N edges, the {\em staircase} structure is not visible for both values of $k$.\\
(b) Passes required by kLev increase sharply to $4$ until $m=n$ edges, and then decrease gradually to $3$, which is independent of $k$.
\label{obs:varM}
\end{observation}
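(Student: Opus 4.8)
Since the statement is an empirical observation rather than a theorem, my plan is to \emph{justify} the measured pass counts by modelling each pass as an operation on a random graph and invoking the three threshold results stated above (DFS Height Property, Connectivity Threshold, Giant Component Threshold). The unifying device is to track the set $U$ of vertices not yet placed in the final tree across passes: for kPath one pass removes from $U$ the longest root-to-leaf path of a DFS tree of the stored random sample, while for kLev the \emph{additional heuristic} (fixing all vertices whose ancestors were unmodified during the pass) collapses $U$ once the tree spine stabilises. Throughout I would treat the first $n'k$ edges of a component on $n'$ vertices as approximately a uniform $G(n',n'k)$, deferring the dependence between successive passes to the end.

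For part (a), the first pass builds a spanning tree and the second stores the first $nk$ edges; if the whole component fits ($m\le nk$) the true tree is built and added directly, giving the minimum of $2$ passes. The $2\to 3$ transition therefore sits essentially at the storage budget $m=nk$, and the small leftward offset in the plot I would attribute to the connecting star edges and the spanning-tree bookkeeping consuming part of the $nk$-edge budget, so the true tree ceases to fit slightly before the nominal $m=nk$. Once $m>nk$ the second pass can only extract the longest path of the stored sample; by the DFS Height Property this sample has a DFS tree of depth $\Omega(n)$ (indeed near-Hamiltonian for the densities in question), so the path covers almost all vertices and leaves only a small residual, explaining why so few further passes are needed. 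The \emph{staircase} and its $k$-dependent thresholds I would obtain by recursion: a residual component of $r$ vertices forces an extra pass precisely when its induced edge count in the full graph, which scales like $m r^2/n^2$, exceeds its per-component budget $rk$; crossing this boundary (and, equivalently, the residual crossing the Giant Component Threshold from a scatter of $O(\log n)$-components to one dense giant) adds one recursion level, i.e.\ one pass. Since the boundary density scales with $k$, every threshold moves to higher density as $k$ grows. The loss of the staircase beyond roughly $100n$ edges I would explain by the recursion depth becoming large while successive thresholds crowd together, so that averaging over test cases smooths the discrete jumps into the observed gradual rise.

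For part (b) the behaviour is dominated by the additional heuristic rather than by the $k$-level extraction, which is exactly why the curve is independent of $k$. In the sparse range $m\lesssim n$ the graph is passing through and just beyond the giant-component threshold $m=n/2$, where the giant component is only forming and its DFS tree is structurally most unstable between passes; here the heuristic fixes few vertices and the count climbs to its peak of $4$ near $m=n$. For $m\gg n$ the DFS Height Property forces a near-Hamiltonian, very deep tree, but this long spine stabilises after the first pass, so the heuristic adds almost all of it at once and the count settles back to $3$; because the stability of the spine is a property of the graph and not of the level budget, the curve is essentially independent of $k$.

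The main obstacle is making part (a) rigorous: the stored sample, the extracted longest path, and the residual induced subgraph are \emph{not} independent across passes, so the clean recursion above requires conditioning on the outcome of every earlier pass and controlling how the longest-path extraction biases the residual degree sequence away from a fresh $G(r,m')$. Pinning the exact staircase thresholds (as opposed to their qualitative monotonicity in $k$) and proving the smoothing-out beyond $100n$ edges would demand sharp concentration for residual component sizes under this conditioning, which reaches beyond the three threshold theorems and is precisely why these results are stated as experimental observations rather than theorems.
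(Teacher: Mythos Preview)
The paper does not prove this statement at all: it is presented purely as an experimental observation read off from Figure~\ref{fig:varM}, with no accompanying proof. The only justification the paper offers is the short ``Inferences from Observations'' paragraph, where it (i) links the kLev peak near $m=n$ in part~(b) to the giant-component threshold, and (ii) states that for part~(a) the authors ``have only been able to understand the transition of kPath to minimum passes when $m<nk$,'' explicitly declaring that ``the remaining features \ldots\ would require further examination of the properties of random graphs, which is beyond the scope of this paper.''

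Your proposal is therefore not comparable to a paper proof, because there is none; rather, it is a considerably more ambitious heuristic than anything the authors attempt. Your explanation of~(b) via the giant-component threshold matches the paper's one-line inference, and your use of the DFS Height Property to explain the eventual settling at~3 and the $k$-independence is a reasonable extension. For~(a), your recursive budget argument for the staircase and the averaging explanation for its disappearance go well beyond the paper. One point where you diverge from the paper's reasoning: you attribute the $2\to 3$ transition occurring \emph{before} $m=nk$ to star edges and spanning-tree bookkeeping eating into the budget, whereas the paper (in the analogous discussion for Observation~\ref{obs:varN}(a)) blames the giant component being denser than the ambient graph, so that $n'k<m'$ even when $nk\ge m$. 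Your own honest caveat at the end --- that the cross-pass dependence prevents this from being made rigorous --- is exactly the paper's position, so your framing as a justification rather than a proof is appropriate.
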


\renewcommand{\figW}{\linewidth}
\begin{figure*}[!ht]
\centering
\includegraphics[width=\figW]{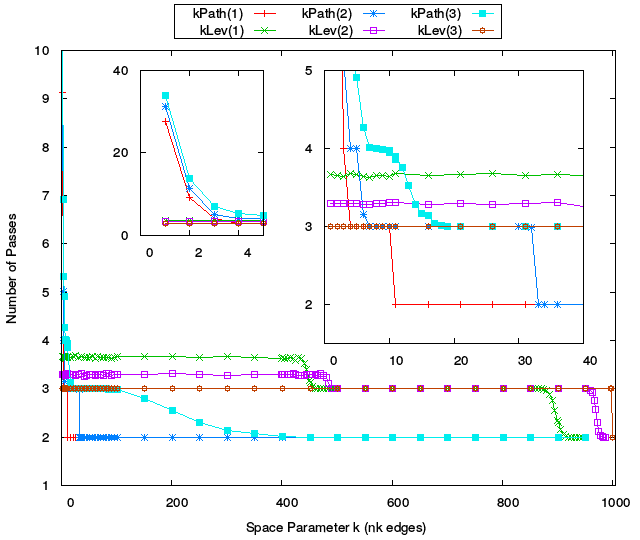}
\caption{
For various algorithms, the plot shows the number of passes required to build a DFS tree of a random graph with for $n=1000$ for different densities (1) $n\log n$, (2) $n\sqrt{n}$, and (3) $ {n \choose 2}$. The value of the space parameter $k$ (allowing local $nk$ space) is varied up to $n$. 
}
\label{fig:varS}
\end{figure*} 

Finally, we examine in Figure~\ref{fig:varS}, the variation of performance of kPath and kLev for random graphs having $1000$ vertices. We consider different values of $k$ up to $n$ and three different densities of the graph, namely, $n\log n, n\sqrt{n}$ and $n\choose 2$. This experiment thus allows us to closely examine the behaviour of the two algorithms with increasing values of $k$ as noticed in the previous experiment. The variation of the number of passes required by kPath is very interesting as it starts from a large value which sharply falls as $k$ increases. The initial value increases only slightly with the density, but the eventual fall to the minimum value is delayed with the increasing density. This sharp decline can be easily explained by the sharp increase in depth of the DFS tree of the auxiliary graph containing $nk$ edges.  However, again instead of a gradual fall, the passes decrease as a {\em staircase} with steps at $4$ and $3$ passes. Further, with the increase in density the size of a step (range of $k$ for that value) increases. Also, only in case of very high density, the descent is gradual instead of abrupt, though still following the {\em staircase} structure.  In all cases, the minimum passes are required only when $m<nk$, not when $m=nk$ as theoretically expected. The impact of density seemingly affects the size of the residual components after the longest path is removed. This adversely affects the performance of kPath. On the other hand, kLev seems to be much less affected by the variation in $k$. But still, the number of passes required decreases as a {\em staircase} in steps, with the increase in $k$. However, the increase in density clearly improves its performance where higher density reduces the passes to $3$ for smaller values of $k$. For larger values of $k$, the number passes reach the minimum value, when $k$ is larger than the expected height at the given density. Thus, the performance of kPath improves significantly with increasing $k$ and slowly with decreasing density of the graph. On the other hand, the performance of kLev (especially its {\em additional heuristic}) is affected very little by these variations, improving marginally by the increase in both $k$ and density. The key observations of this experiment are as follows.

\begin{observation} On increasing space, the advanced algorithms perform as a {\em staircase}\\
(a) Passes required by kPath decrease in steps, with threshold and size of step increasing with the increase in density. The descent between steps is abrupt, except for highest density.\\
(b) Passes required by kLev slowly decrease in steps at much higher values of $k$, which increases with the increase in density.
\label{obs:varS}
\end{observation}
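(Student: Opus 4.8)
The plan is to explain the \emph{staircase} of Observation~\ref{obs:varS} as the superposition of two facts: the number of passes $P(k)$ is an integer-valued function of the space parameter $k$, whereas the underlying \emph{coverage per pass} of both algorithms is a smooth, increasing function of $k$; and the high-probability structural guarantees available for random graphs --- the DFS Height Property together with the Connectivity Threshold (Theorem~\ref{theorem:Frieze1}) and the Giant Component Threshold (Theorem~\ref{theorem:Frieze2}) --- let me locate where $P(k)$ drops and track how these jump points drift with the edge density $d=m/n$. Writing $P(k)=\lceil g(k)\rceil$ for a smooth decreasing $g$ immediately yields steps; the real work is to determine $g$ for each algorithm and to place its jumps as a function of $d$.

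For part (a), I would analyse the per-pass shrinkage of the unvisited giant component as a recursion. In a pass acting on a residual component with $n_t$ vertices and density $d_t$, the auxiliary DFS tree is built from its first $\min(n_t k,\, m_t)$ edges, a subgraph of effective density $\min(k,d_t)$; by the DFS Height Property its longest root-to-leaf path --- exactly the path extracted in that pass --- covers with high probability at least a $\big(1-\log n_t/\min(k,d_t)\big)$ fraction of the component, so that $n_{t+1}=O\big(n_t\log n_t/\min(k,d_t)\big)$. The number of passes $P(k)$ is the least $t$ making the residual negligible, an integer that falls by one only when $k$ crosses a threshold permitting termination one pass earlier: this is the staircase. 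The density dependence follows because, as noted in the text, removing a long DFS path from a denser graph leaves \emph{fewer but larger} residual components, which I would formalise by applying the Giant Component Threshold (Theorem~\ref{theorem:Frieze2}) to the residual edge set; larger residuals raise the iteration count, shifting every threshold to a larger $k$ and widening the $k$-range on which $P$ is constant, so both the threshold and the step size grow with $d$. Finally, since each plotted value is averaged over many instances, the descent at a threshold is sharp when the threshold is sharply concentrated (low and moderate density) and smooths out only at the highest density, where the residual components are large enough that the coverage --- and hence the threshold location --- varies noticeably across instances, explaining the single gradual descent.

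For part (b), the naive bound is $P=\lceil h/k\rceil$ with $h$ close to $n$ for random graphs (again by the DFS Height Property), which alone would make $P$ large; the observed small, slowly decreasing staircase at large $k$ is therefore a phenomenon of the \emph{additional heuristic}. I would model the heuristic through the \emph{stable prefix} of the tree: once a vertex and all its ancestors are left unmodified during a pass they are committed forever, so the quantity that actually governs the passes is an \emph{effective height} $h_{\mathrm{eff}}$ --- the number of levels still being rearranged --- rather than $h$. The claim then reads $P(k)=\lceil h_{\mathrm{eff}}/k\rceil$ beyond the committed portion; because $h_{\mathrm{eff}}$ is small and only weakly sensitive to $k$, the integer ceiling changes only at large $k$ and $P$ descends slowly, while the growth of the thresholds with density is inherited from the growth of the transient $h_{\mathrm{eff}}$ with $d$, which in turn tracks $h$ via the DFS Height Property.

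The hard part will be justifying that the random-graph structure survives across passes. After a pass the residual graph is conditioned on the extracted DFS path (for kPath) or on the stabilised prefix (for kLev) and is no longer a fresh $G(n',m')$, so neither the DFS Height Property nor the phase-transition thresholds may be reapplied verbatim. I expect the principal effort to be a conditioning or coupling argument showing that the residual giant component is, with high probability, close enough to a uniform random graph of the right order and density for the recursion to persist, and --- for kLev --- a tractable estimate of $h_{\mathrm{eff}}$ induced by the heuristic, whose evolution depends on the fine dynamics of which subtrees the restructuring procedure disturbs in each pass.
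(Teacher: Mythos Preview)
There is a category error here: Observation~\ref{obs:varS} is not a theorem but an \emph{empirical observation} extracted from the experimental plots in Figure~\ref{fig:varS}. The paper provides no proof whatsoever; it merely records what the authors saw when running \textsc{kPath} and \textsc{kLev} on random graphs for varying $k$ and three densities. Indeed, in the paragraph titled ``Inferences from Observations'' immediately following, the authors explicitly state that, beyond the transition of \textsc{kPath} to the minimum number of passes when $m<nk$, ``the remaining features of this exceptional performance of kPath and kLev would require further examination of the properties of random graphs, which is beyond the scope of this paper.'' So there is nothing in the paper against which to compare your proposal.

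Your write-up is therefore not a proof attempt of a stated theorem but a research programme for \emph{explaining} an experimental phenomenon the authors themselves left open. Judged on those terms, your framing $P(k)=\lceil g(k)\rceil$ for a smooth decreasing $g$ is a reasonable heuristic for why integer-valued pass counts produce staircases, and your invocation of the DFS Height Property and the phase-transition thresholds matches the partial intuitions the paper offers. However, the obstacles you yourself flag are genuine and serious: after the first pass the residual graph is conditioned on an adversarially chosen DFS path (for \textsc{kPath}) or on the heuristic's stabilised prefix (for \textsc{kLev}), so none of the cited random-graph theorems apply to subsequent passes, and no coupling argument of the kind you sketch is known. Your model for \textsc{kLev}'s ``effective height'' $h_{\mathrm{eff}}$ is likewise undefined without a quantitative analysis of which subtrees Procedure~\ref{alg:rebuild} disturbs --- something the paper never attempts. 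In short, you are proposing to prove what the paper explicitly declares open; the plan is plausible as a heuristic narrative but is not, and cannot yet be, a proof.
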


\subsubsection*{Inferences from Observations}
The property of the phase transition of random graphs, seemingly explains some thresholds and the existence of steps and abrupt changes in the performance of kLev and kPath. The threshold for emergence of a giant component around $n/2$ to $n$ edges coincides with the worst performance of kLev in Observation~\ref{obs:varM}(b). This explains why prior to it when each component is very small, kLev performs better on each component individually. The threshold for connectivity seemingly explains the performance of kPath in Observation~\ref{obs:varN}(a). At this point despite $nk>m$, the presence of a much greater density in  the giant component would result in $n'k<m'$, requiring an extra pass. 

The remaining observations essentially deal with the surprising performance of kPath and the impact of the {\em additional heuristic} on kLev. Also, it demonstrates the variation of the performance in the form a {\em staircase}, with different thresholds for each step. Further, for higher densities the {\em staircase} structure is much relaxed and gradual. Among these variations we have only been able to understand the transition of kPath to minimum passes when $m<nk$. The remaining features of this exceptional performance of kPath and kLev would require further examination of the properties of random graphs, which is beyond the scope of this paper. These seem to be important future directions for understanding the behaviour of these algorithms.

\subsection{Experiments on Real Graphs}
We now evaluate the algorithms on real graphs in Table~\ref{tab:real_data}. We evaluate the advanced algorithms for different values of the space parameter, namely, $k=1,2,5$ and $10$. The evaluated real graphs are selected to cover a wide range of number of vertices and different graph densities. The density of a graph is estimated by comparing $\frac{m}{n}$ and $n$. This allows us to better predict the performance of these algorithms for even larger graphs. 

\begin{table}[!ht]
	\centering
	\resizebox{1.02\columnwidth}{!}{
	\begin{tabular}{|l|r|r|r|r|r|r|r|r|r|r|r|r|r|}
		\hline
Dataset & $n$ & $m$ & $\frac{m}{n}$ & Simp & Imprv & kPath & & & & kLev & & & \\
 & && & & & n & 2n & 5n & 10n & n & 2n & 5n & 10n \\
\hline 
CU&     49&     107&    2.18&   23&     32&     4&      4&      2&      2&      3&      3&      3&      3\\
Mpdz&   212&    242&    1.14&   181&    26&     7&      3&      2&      2&      3&      3&      3&      3\\
AJazz&  198&    2.74K&  13.85&  53&     154&    20&     14&     5&      3&      3&      3&      3&      3\\
CH&     1.47K&  1.3K&   0.88&   1.46K&  8&      5&      2&      2&      2&      2&      2&      2&      2\\
HM&     2.43K&  16.6K&  6.86&   1.31K&  753&    37&     13&     4&      2&      4&      4&      4&      4\\
Apgp&   10.7K&  24.3K&  2.28&   8.15K&  858&    35&     12&     2&      2&      4&      4&      4&      4\\
ArxAP&  18.8K&  198K&   10.55&  9.36K&  6.49K&  197&    37&     9&      4&      5&      5&      5&      5\\
AsCaida&        26.5K&  53.4K&  2.02&   24.7K&  979&    36&     8&      2&      2&      4&      4&      4&      4\\
BrightK&        58.2K&  214K&   3.68&   43.3K&  10.3K&  193&    13&     2&      2&      5&      5&      5&      5\\
LMocha& 104K&   2.19M&  21.07&  66.4K&  40K&    688&    22&     6&      4&      4&      4&      4&      4\\
FlickrE&        106K&   2.32M&  21.87&  55K&    51.7K&  599&    33&     6&      9&      5&      5&      5&      5\\
Wordnet&        146K&   657K&   4.50&   96.9K&  23.7K&  238&    51&     5&      2&      6&      6&      6&      6\\
Douban& 155K&   327K&   2.11&   145K&   11.5K&  215&    7&      2&      2&      4&      4&      4&      4\\
Gowalla&        197K&   950K&   4.83&   134K&   45.3K&  472&    30&     4&      2&      6&      6&      6&      6\\
Dblp&   317K&   1.05M&  3.31&   214K&   42.3K&  378&    35&     2&      2&      6&      6&      6&      6\\
Amazon& 335K&   926K&   2.76&   204K&   80.1K&  245&    76&     2&      2&      6&      6&      6&      6\\
\hline	
	\end{tabular}
}

\caption{Comparison of number of passes required by different algorithms on real graphs. } 
	\label{tab:real_data}
\end{table}

Overall the performance of all the algorithms follow similar pattern as in case of random graphs. The prominent high density graphs are AJazz, ArxAP, LMocha and FlickrE. The performance of Simp is close $n$ but improves with the increase in the density of the graph. However, the performance of Imprv decreases with the increase in graph density as the height of the DFS tree increases, even worse than Simp for extremely high density (see AJazz). The performance of the advance algorithms kPath and kLev is again much better than the rest even for $k=1$, except for extremely sparse graphs (see CH) where kLev and Imprv are comparable. The performance of kPath sharply improves with the value of $k$ to reach minimum value 2. The prominent exception being FlickrE where strangely the number of passes increases for $k=10$. Also, the performance is inversely affected by the increasing density, where kPath requires minimum passes even for $k=5$ when density is less. The performance of kLev again seems unaffected by the value of $k$. Also, it does not require minimum number of passes (2) even for very small graphs, except extremely sparse graph (CH) where the height of DFS tree is also very less. While for random graphs, the performance of kLev improved marginally with density, it does not seems the case with real graphs with no clear pattern emerging. Eventually for $k=10$, kPath performs marginally better than kLev for most graphs. However, for smaller values of $k$ it can perform very bad as compared to kLev.

\subsection{Results}
The two algorithms kPath and kLev perform much better than the rest even when $O(n)$ space is allowed. For both random and real graphs, kPath performs slightly worse as the density of the graph increases. On the other hand kLev performs slightly better only in random graphs with the increasing density. The effect of the space parameter is very large on kPath from $k=1$ to small constants, requiring very few passes even for $k=5$ and $k=10$. However, kLev seems to work very well even for $k=1$ and has a negligible effect of increasing the value of $k$. Overall, the results suggest using kPath if $nk$ space is allowed for $k$ being a small constant such as $5$ or $10$. However, if the space restrictions are extremely tight it is better to use kLev. 

Also, note that the superior performance of the kLev seems to be greatly attributed to the {\em additional heuristic}. Further, kPath performs much better than expected. Thus, it would be interesting to theoretically study these algorithms which seem to work extremely well in practice.

\section{Pseudocodes}
\label{sec:pseudoCode}
%

\begin{procedure}
Initialize $T\leftarrow \{r\},v\leftarrow r$\;
\While(\tcc*[f]{$T$ does not span all the vertices})
{$T$ has $<n$ vertices}
{
\For(\tcc*[f]{Initiate a pass over edges in $E$}){edges in $E$}
{
For each vertex $x\in T$, store its edge $e_x=(x,y)$ to some $y\notin T$ (if any)\;
}

\lWhile{$e_v$ is not valid edge}
{
$v\leftarrow par(v)$
}
Add $e_v$ to $T$
\tcc*[r]{say $e_v=(v,v')$, where $v\in T$ and $v'\notin T$}
$v\leftarrow v'$\;
}
\caption{Compute-DFS($r$): Computes a DFS tree of the 
component $C$ rooted at the vertex $r_c\in C$.}
\label{alg:simple1}
\end{procedure}

\begin{procedure}
Initialize $T\leftarrow \{r\}$\;
\While(\tcc*[f]{$T$ does not span all the vertices})
{$T$ has $<n$ vertices}
{
\For(\tcc*[f]{Initiate a pass over edges in $E$}){edges in $E$}
{
Compute the components $C_1,...,C_f$ of $G'$ using Union-Find algorithm\;
For each vertex $y\notin T$, store its edge $e_y$ to some leaf of $T$ (if any)\;
}

\ForEach{component $C\in C_1,...,C_f$}
{
$(x_{C},y_{C})\leftarrow$ Any valid edge $e_y$, where $y\in C$\;
Add $(x_{C},y_{C})$ to $T$\;
}
}
\caption{Compute-DFS-Improved($r$): Computes a DFS tree of the 
component $C$ rooted at the vertex $r_c\in C$.}
\label{alg:simple}
\end{procedure}

\SetKwRepeat{Do}{do}{while}
\begin{procedure}
$E'_C\leftarrow \emptyset$\;
\tcc{Initiate a pass over $E$, for all components in parallel}
\While(\tcc*[f]{process first $|V_C|k$ edges of $C$})
{$|E'_C|\leq |V_C|k$ or Stream is over}
{
\lIf{next edge $e$ belongs to $C$}
{$E'_C\leftarrow E'_C\cup \{e\}$}
}

\BlankLine
\lIf{$par(r_C)\neq \phi$}{Add $(r_C,par(r_C))$ to $T$}
$T'_C\leftarrow$ DFS tree of $T_C\cup E'_C$ from root $r_C$\;
\lIf(\tcc*[f]{pass over $E$ was completed}){$|E'_C|\leq |V_C|\cdot k$}
{Add $T'_C$ to $T$}

\Else{
$P\leftarrow$ Path from $r_C$ to lowest vertex in $T'_C$\;
Add $P$ to $T$\;
\BlankLine
\tcc{Continue the pass for all components in parallel}
\ForAll{edges in $E'_C$ followed by the remaining pass over $E$}
{
Compute the components $C_1,...,C_f$ of $C\setminus P$ using Union-Find algorithm\;
\tcc{This essentially computes $T_{C_1},...,T_{C_f}$}
Find lowest edge $e_i$ from each component $C_i$ to $P$\;
}

\ForEach{Component $C_i$ of $C\setminus P$}{
$par(y_i)\leftarrow$ $x_i$
\tcc*[r]{Let $e_i=(x_i,y_i)$, where $y_i\in C_i$}
\ref{alg:advAlg1}($C_i$,$T_{C_i}$,$y_i$)\;
}

}
\caption{Compute-DFS-Fast($C$,$T_C$,$r_C$): Computes a DFS tree of the 
component $C$ rooted at the vertex $r_C\in C$.}
\label{alg:advAlg1}
\end{procedure}

\begin{procedure}
Initialize $H_C\leftarrow T_C$\; 

\tcc{Initiate a pass over $E$, for all components in parallel}
\ForEach(\tcc*[f]{edges in the input stream from $C$})
{edge $(x,y)$ in $E$ if $(x,y)\in E_C$}
{
\lIf{$x$ and $y$ are within the same tree in $T_C\setminus T'_C$}{Continue}
		{
		Add $(x,y)$ to $H_C$\;
			\ref{alg:rebuild}($T_C,(x,y)$)\;
			Remove excess edges from $H_C$
			\tcc*[r]{non-tree edges in $T_C\setminus T'_C$}
		}
}

$T'_{C}\leftarrow$ Top $k$ levels of $T_C$
\tcc*[r]{vertices $v$ with $0\leq level\leq k-1$}
			\BlankLine
\lIf{$par(r_C)\neq \phi$}{Add $(r_C,par(r_C))$ to $T$}
Add $T'_{C}$ to $T$\;

			\BlankLine

\ForEach{tree $\tau \in T_C\setminus T'_C$}
{
$v \leftarrow root(\tau)$\;
\tcc{Let $C_v$ be component containing $v$ in $C \setminus T'_C$}
$par(v)\leftarrow$ Parent of $v$ in $T_C$\;
\ref{alg:advAlg2}($C_v$,$T_C(v)$,$v$):
}

\caption{Compute-DFS-Faster($C$,$T_C$,$r_C$): Computes a DFS tree of the 
component $C$ whose spanning tree $T_C$ is rooted at the vertex $r_C\in C$.}
\label{alg:advAlg2}
\end{procedure}

\end{document}